\documentclass[aps,prx,twocolumn,superscriptaddress,longbibliography,10pt,floatfix]{revtex4-2}

\usepackage[utf8]{inputenc}
\usepackage[T1]{fontenc}
\usepackage{graphicx}
\graphicspath{{diagrams/}}
\usepackage{amsmath,amssymb,braket}
\usepackage{hyperref}
\usepackage{physics}        
\usepackage[dvipsnames]{xcolor}

\usepackage{float}              
\makeatletter                   
\let\newfloat\newfloat@ltx      
\makeatother                    
\usepackage{algorithm}
\usepackage{algorithmic}  
\usepackage{amsthm}

\usepackage{booktabs}

\usepackage{quantikz}

\usepackage{microtype} 
\usepackage{import}

\newtheorem{theorem}{Theorem}

\newtheorem{definition}{Definition}
\theoremstyle{remark}

\begin{document}

\title{Hybrid Quantum Error Correction and Mitigation by Purification}

\author{Jonathan Raghoonanan}
\affiliation{New York University Shanghai; NYU-ECNU Institute of Physics at NYU Shanghai, 567 West Yangsi Road, Shanghai, 200124, China.}
\affiliation{Department of Physics, New York University, New York, NY 10003, USA}
\author{Tim Byrnes}
\email{tim.byrnes@nyu.edu}
\affiliation{New York University Shanghai; NYU-ECNU Institute of Physics at NYU Shanghai, 567 West Yangsi Road, Shanghai, 200124, China.}
\affiliation{Center for Quantum and Topological Systems (CQTS), NYUAD Research Institute, New York University Abu Dhabi, UAE}
\affiliation{Department of Physics, New York University, New York, NY 10003, USA}

\date{\today}

\begin{abstract}

Quantum error correction physically removes errors from a quantum state, while quantum error mitigation improves observable estimates by processing noisy measurement data. We introduce \emph{purification quantum error suppression} (PQES), a hybrid approach that uses multiple noisy copies of an unknown state to combine these two ideas. The protocol uses SWAP tests to physically reduce errors by purification, while the full outcome record is used to combine all branches without postselection. In this way, PQES avoids the fixed-success-outcome requirement of standard SWAP-test purification while still accessing the power-purified state $\rho^N$. The SWAP identities allow purification steps to be interleaved with unitary circuit blocks, so errors can be suppressed during a computation rather than only at the final measurement. We provide both a parallel binary-tree implementation and a more compact register-recycled implementation using $O(M\ell)$ coherent data qubits for an $M$-qubit register and $N=2^\ell$ input copies. We analyze the resulting error thresholds under representative noise models. For local depolarizing noise on the product-state family studied here, the threshold is $p_{\mathrm{th}}=3/4$ for any register size, while local dephasing of $|+\rangle^{\otimes M}$ has a threshold of $p_{\mathrm{th}}=1/2$. Local Clifford twirling can be used to convert dephasing into a depolarization channel and restore the higher threshold.
\end{abstract}

\maketitle

\section{Introduction}
\label{sec:intro}


Quantum computation requires mechanisms for suppressing errors before noise overwhelms the useful circuit dynamics. The best-developed route is quantum error correction (QEC), where information is encoded into a larger Hilbert space, errors map states into distinguishable codespaces, and syndrome measurements guide recovery operations to return states back to the logical code space~\cite{NielsenChuang2010,devitt2013quantum,Watrous2018,WoodGambetta2018PTM}. Stabilizer codes provide the dominant framework for scalable fault tolerance because of their locality and high circuit-level thresholds~\cite{Gottesman1997Thesis,Fowler_2012}. However, their power comes with substantial architectural overhead: repeated syndrome extraction, real-time decoding, magic-state factories for non-Clifford gates, and stringent requirements on gates, measurements, resets, and connectivity. These costs motivate complementary error-suppression strategies that exploit complementary physical resources toward enabling larger-scale and higher-depth quantum circuits. Several such strategies are already central to quantum control and near-term computation. Early examples include decoherence-free subspaces and noiseless subsystems, which protect information by encoding it into symmetry sectors that are insensitive to specific collective noise models~\cite{Lidar2012DFSReview,Dankert2009Designs,WallmanEmerson2016}. Dynamical decoupling suppresses unwanted system--bath couplings through fast control pulses~\cite{Viola1999DD}. More recently, quantum error mitigation (QEM) techniques instead leave the noisy state largely intact and reconstruct improved expectation values from noisy data, for example by noise extrapolation or quasiprobability cancellation~\cite{Temme2017Mitigation,Cai2023QEM}. These approaches illustrate a broader design space: one may suppress errors by encoding, by control, by post-processing, or by consuming additional quantum resources.

A closely related line of work, traditionally framed in the context of long-distance communication and resource preparation rather than computation, is entanglement purification \cite{Dur2007EPReview,yan2023advances}. In a typical purification protocol, one starts with several noisy copies of a quantum state, and using only local operations and classical communication (LOCC), a single higher fidelity copy of the target state is obtained after post-selection.  
In most purification protocols, the \emph{target state is fixed and known}, such as a specific Bell pair, GHZ state, or graph state. Furthermore, they typically employ postselection where particular measurement outcomes are desirable and otherwise the resultant state is discarded. For example, the original protocols of Bennett, Wootters, and co-workers distilled high-fidelity Bell pairs from many noisy copies, together with  twirling operations to an isotropic Bell-diagonal form~\cite{Bennett1996Twirling}. Deutsch, Sanpera, and co-workers incorporated such purification steps into quantum privacy amplification for quantum key distribution over noisy channels~\cite{Deutsch1996QPA}. Subsequent work generalized these ideas to multipartite entangled states~\cite{Murao1998, Saeed2025}. Purification is considered essential for quantum repeaters and networking architectures~\cite{Pan2001EPComm,Briegel1998,Kimble_2008}. Modern variants optimize gate counts, success probabilities, and robustness for concrete network architectures~\cite{Torres2024Purif,Illiano_2022}, but the basic setting remains one in which a known resource state is distilled from many noisy copies.

Historically, purification has been mainly considered in the context of quantum state preparation due to the fact that the target state is fixed and known.  However, a more general \emph{unknown-state} purification can be performed, where the task is to recover a nearly pure copy of an unknown state from many noisy copies, without prior knowledge of which pure state was prepared. This can be performed when typical restrictions, such as LOCC-only processing, are relaxed. Early attempts include optimal collective processing and universal purification procedures, but they are generally limited to single qubits, asymptotic regimes, and may require limited knowledge of the initial state~\cite{Cirac_1999,keyl_1999}. More recently, Childs and co-workers showed that a recursive SWAP-test-based protocol can boost the fidelity of an unknown pure state given multiple noisy copies for qudits of arbitrary dimension~\cite{Childs_2025}. Here, the SWAP test consists of a quantum primitive that determines the fidelity between two states \cite{Buhrman2001SwapTest}. In the purification context, by post-selecting on the symmetric sector of the test, one can iteratively amplify overlap with the dominant eigenvector of the input density operator. Because the protocol acts on whatever pure state occupies the dominant eigenvalue, it naturally aligns with the idea of protecting unknown encoded information rather than a fixed resource state. This is the purification analogue of syndrome extraction: a measurement partitions the enlarged Hilbert space into sectors carrying different information about the noise.

In parallel, purification-based quantum error mitigation (QEM) methods exploit the fact that the ideal output of a quantum computation is often approximately pure. A prominent example is virtual distillation, introduced by Huggins, McClean, and co-workers~\cite{Huggins2021}. In virtual distillation, one estimates expectation values with respect to the normalized power state $\rho^N/\Tr(\rho^N)$, even though the physical state available in the laboratory is $\rho^{\otimes N}$. Koczor independently developed a closely related error-suppression framework based on derangement operators~\cite{Koczor_derangement}, and further clarified the dominant-eigenvector limitation: the power map converges to the dominant eigenvector of the noisy state, which may differ from the ideal noiseless state~\cite{Koczor2021_dominantEV}. Subsequent work has improved the resource profile of these techniques, including reset-based qubit-efficient virtual distillation~\cite{czarnik_qubit_eff} and constant-depth multivariate trace-estimation circuits~\cite{Quek_2024_const_depth}. This raises a natural question. Conventional QEC acts on the quantum state itself: syndrome information is used to remove physical errors so that the corrected state can continue through the circuit. QEM, by contrast, usually acts at the level of measurement data: noisy outcomes from one or more circuits are combined to estimate a less noisy observable. Is there a useful middle ground, where multiple noisy copies are used to physically suppress errors in the quantum registers while still retaining the favorable no-postselection logic of QEM?

\begin{figure*}[t]
  \includegraphics[width=\linewidth]{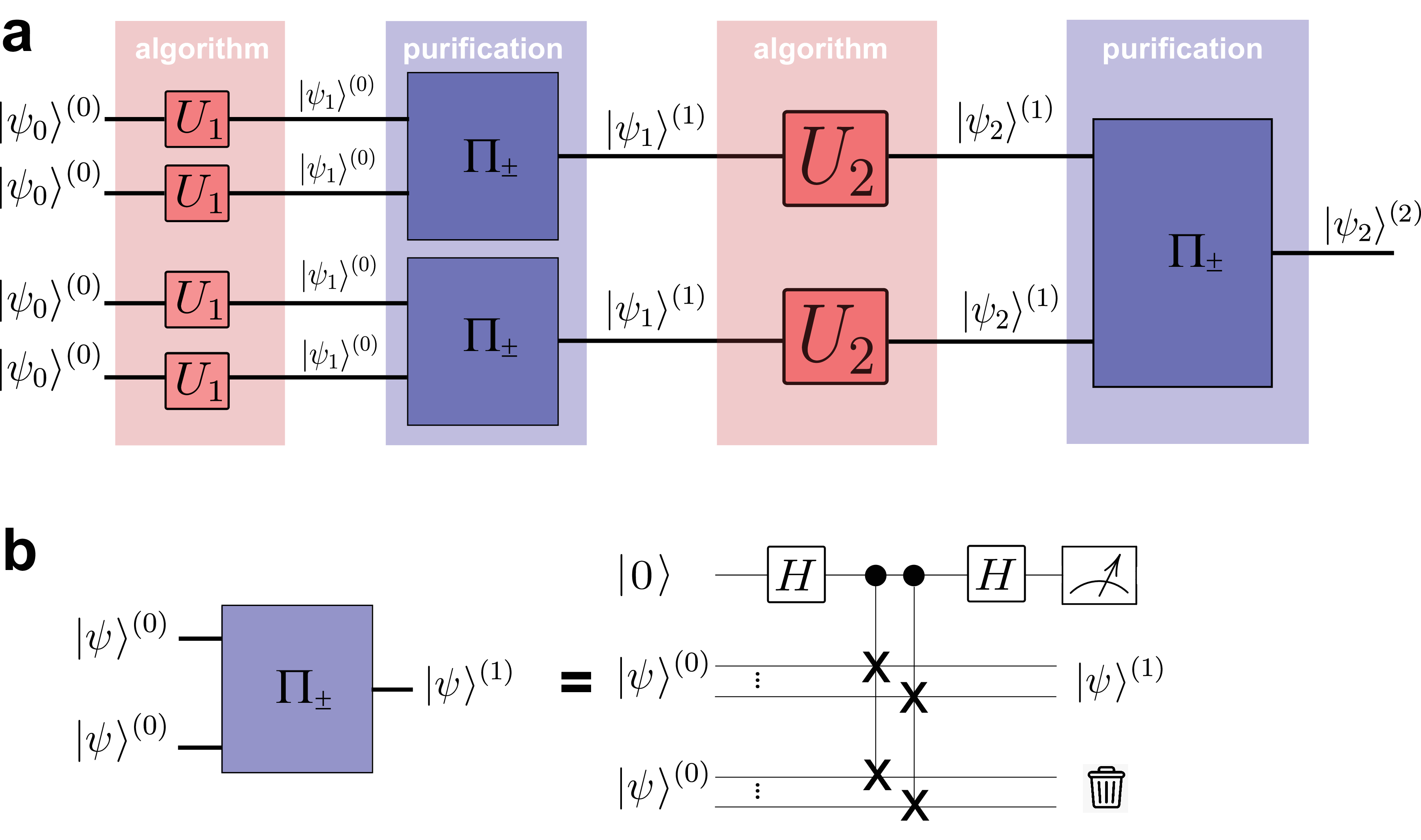}
  \caption{Purification quantum error suppression implementation in a quantum circuit. (a) Binary-tree configuration of SWAP-test purification steps interleaved within an $M$-qubit quantum algorithm consisting of a sequence of unitary gates $U_n$. The full quantum circuit is $U=\prod_n U_n$. The label $\ket{\psi_n}^{(\ell)}$ denotes the register state at the $n$th stage of the algorithm after $\ell$ rounds of SWAP-test purification processing. In the non-postselected PQES mode, the branch state continues through subsequent circuit layers, while the total SWAP parity selects observables of the power-purified state $\rho^{2^\ell}/\Tr(\rho^{2^\ell})$. (b) The SWAP gadget and its definition. The $\Pi_+$ measurement corresponds to the ancilla outcome $|0\rangle$ and $\Pi_-$ corresponds to $|1\rangle$. The second copy of the state is discarded or traced out. Thick horizontal lines denote $M$-qubit registers, while thin horizontal lines denote single qubits.}
  \label{diag:main_use_case}
\end{figure*}

In this paper, we answer this question using SWAP-test purification. We call the resulting scheme \emph{purification quantum error suppression} (PQES). The protocol uses multiple noisy copies of the same unknown state. Pairs of copies are combined using SWAP tests, and the surviving registers are physically transformed by the measurement outcomes. If one keeps only the symmetric SWAP outcome, one recovers the standard postselected SWAP-purification protocol~\cite{Childs_2025}. The key difference in PQES is that there is no postselection, i.e. no SWAP outcome is discarded. Symmetric and antisymmetric outcomes are both retained, and the outcome record is used to combine the branches with the correct signs. This gives a hybrid QEC/QEM structure. Like QEC, PQES physically changes the quantum state during a repeated noise--recovery cycle, so the surviving register can continue through later circuit layers. Like QEM, it uses multiple noisy copies and an outcome-dependent estimator to access the power-purified state $\rho^N/\Tr(\rho^N)$. The SWAP identities also allow unitary gates to be interleaved with purification steps, so the protocol can suppress errors during the circuit rather than only at the end.

The performance of PQES is governed by whether the desired target state remains the dominant eigenvector of the noisy density matrix after each error cycle. If it does, increasing the number of purification rounds amplifies the target component; if it does not, purification converges to the wrong state. This gives a natural generalization of the concept of an error threshold that can be defined in the context of PQES. For the local depolarizing model studied here, this threshold is $p_{\mathrm{th}}=3/4$ for any register size. We also show that local dephasing has a lower threshold because it changes the eigenbasis of the noisy state, and that twirling can restore depolarizing behavior.



\section{SWAP-based purification}
\label{sec:swap-basics}

In this section, we review the SWAP test and the branch maps that underlie both postselected SWAP-based purification and the non-postselected PQES protocol.

\subsection{The SWAP test}
\label{subsec:swap-basics}

The core operation is the SWAP test applied to two registers, which may contain either identical copies of a noisy state or two intermediate states generated by the same recursive procedure \cite{Childs_2025}. Let $A,B$ be two quantum registers, each with a $D$-dimensional Hilbert space, where $D=2^M$ for $M$ qubits. In the SWAP test, as shown in Fig.~\ref{diag:main_use_case}(b), an ancilla qubit controls the SWAP operation on the two registers. The measurement outcomes on the ancilla partition the $AB$ registers into symmetric and antisymmetric subspaces. The $ | 0 \rangle $ outcome projects $AB$ onto the symmetric subspace
\begin{equation}
\Pi_{+} \ket{\psi}_A \otimes \ket{\phi}_B = \frac{1}{2} \left( \ket{\psi}_A \otimes \ket{\phi}_B+\ket{\phi}_A \otimes \ket{\psi}_B \right) ,
\label{eq:piplussym}
\end{equation}
while $ |1 \rangle $ projects onto the antisymmetric subspace,
\begin{equation}
\Pi_{-} \ket{\psi}_A \otimes \ket{\phi}_B = \frac{1}{2} \left( \ket{\psi}_A \otimes \ket{\phi}_B - \ket{\phi}_A \otimes \ket{\psi}_B \right) .
\end{equation}
These are unnormalized projected states. The projectors onto the symmetric and antisymmetric subspaces are
\begin{equation}
\Pi_{\pm}=\tfrac12\bigl(I\pm \mathrm{SWAP}\bigr),
\label{pidef}
\end{equation}
where the register-SWAP operator is
%
%
\begin{align}
\mathrm{SWAP} &= \prod_{j=1}^{M}\mathrm{SWAP}_{j}, \nonumber\\
\mathrm{SWAP}_{j} &=
\frac{1}{2}
\left( I^A_j I^B_j + X^A_j X^B_j + Y^A_j Y^B_j + Z^A_j Z^B_j \right).
\label{eq:registerswap}
\end{align}
Here $X^A_j,Y^A_j,Z^A_j$ act on qubit $j$ of register $A$, and $X^B_j,Y^B_j,Z^B_j$ act on the corresponding qubit of register $B$. The projectors satisfy $\Pi_+ + \Pi_-=I $ and $ \Pi_\pm^2=\Pi_\pm $. For $M$-qubit registers, the controlled register-SWAP decomposes into $M$ Fredkin gates, one for each physical qubit pair $A_jB_j$~\cite{Buhrman2001SwapTest,NielsenChuang2010,Watrous2018}. These Fredkin gates can be parallelized when the architecture supplies a suitable fanout/GHZ control or a native multi-target controlled-SWAP; otherwise this decomposition has depth $O(M)$ in the standard circuit model.

\subsection{SWAP gadget}

In the analysis that follows, we will make use of the SWAP gadget, which involves performing the SWAP test projection (\ref{pidef}), then discarding the second register. The resulting state for an initial product state $ \rho \otimes \rho' $ is \cite{Buhrman2001SwapTest,Childs_2025}
\begin{align}
\rho_{\pm} = & \frac{\Tr_B \left[\Pi_{\pm}(\rho\otimes\rho')\Pi_{\pm}\right]}
    {P_{\pm} } \nonumber \\
     = &  \frac{1} {4 P_{\pm}  } \Big( \Tr_B [\rho\otimes\rho']+ \Tr_B [\rho'\otimes\rho] \nonumber \\
     & \pm \Tr_B [\mathrm{SWAP}(\rho\otimes\rho')] \pm \Tr_B [(\rho\otimes\rho')\mathrm{SWAP}]  \Big)    
     \nonumber\\
     = & \frac{\rho + \rho' \pm (\rho \rho' + \rho' \rho ) }{4 P_{\pm}} .    
     \label{eq:generalswapgadget}
\end{align}
In obtaining the last line, we used the partial-trace SWAP identities
\begin{align}
\Tr_B[(\rho\otimes\rho')\mathrm{SWAP}] &= \rho\rho',
\nonumber\\
\Tr_B[\mathrm{SWAP}(\rho\otimes\rho')] &= \rho'\rho .
\label{eq:partialswaptrick}
\end{align}
The probabilities of the two outcomes are
\begin{align}
P_{\pm} = \frac{1}{2} ( 1 \pm \Tr [ \rho \rho']) .
\label{eq:swapprobs}
\end{align}
For identical inputs, $\rho'=\rho$, (\ref{eq:generalswapgadget}) reduces to
\begin{align}
\rho_{\pm} = \frac{\rho \pm \rho^2}{1\pm \Tr(\rho^2)},
\label{eq:identicalswapbranches}
\end{align}
with corresponding probability
\begin{align}
P_{\pm} = \frac{1}{2}\left(1\pm \Tr(\rho^2)\right).
\label{eq:identicalswapbranches_prob}
\end{align}
%

%
%



\section{Purification quantum error suppression}
\label{sec:PQES}



We now construct PQES from the SWAP-gadget identities of Sec.~\ref{sec:swap-basics}. We first explain the single-round purification mechanism, then show how the purified component can be obtained without postselection by using the SWAP outcome record. We then generalize the construction to multiple rounds and explain how to extract observables.

\subsection{Single-round purification}

The basic purification effect of the SWAP gadget follows by setting the two input states to be identical, $\rho'=\rho$, in ~(\ref{eq:generalswapgadget}). The two resulting branch states are given in~(\ref{eq:identicalswapbranches}). For the $+$ outcome, the resulting state is a physical mixture of the original state $\rho$ and the squared state $\rho^2$. The squared state may be considered a purified version of $\rho$ because, if
\begin{align}
\rho=\sum_i \lambda_i |\lambda_i\rangle\langle \lambda_i|,
\end{align}
then normalization after squaring gives
\begin{align}
{\cal P}(\rho) := \frac{\rho^2}{\Tr(\rho^2)} =
\frac{\sum_i \lambda_i^2 |\lambda_i\rangle\langle \lambda_i|}
{\sum_i \lambda_i^2}.
\label{eq:purifiedstaten2}
\end{align}
Thus, the relative weight of eigencomponents with smaller eigenvalues is suppressed. This is the basic spectral mechanism underlying virtual distillation and error suppression by derangements \cite{Huggins2021,Koczor_derangement}.

Furthermore, using (\ref{eq:identicalswapbranches})-(\ref{eq:identicalswapbranches_prob}), we see that if the two branch states are averaged with their ordinary probabilities, then the purification terms cancel and simply return the original state:
\begin{align}
P_+\rho_+ + P_-\rho_- = \rho.
\label{eq:ordinarybranchaverage}
\end{align}
However, if we instead take the signed branch average of the SWAP test outcomes, we can actually obtain the square of the density matrix,
\begin{align}
P_+\rho_+ - P_-\rho_- = \rho^2 .
\label{eq:signedbranchaverage}
\end{align}
Equivalently, for any observable $O$,
\begin{align}
P_+\Tr(O\rho_+) - P_-\Tr(O\rho_-) = \Tr(O\rho^2).
\label{eq:signedobservableidentity}
\end{align}
This distinction is the basic physical mechanism of PQES. Each $\rho_\pm$ is a genuine conditional state of the surviving register. Keeping only the $+$ outcome, or repeating the SWAP test until the $+$ outcome occurs, gives the postselected physical purification mode used in streaming SWAP-based state purification~\cite{Childs_2025}. The PQES construction used here instead keeps both outcomes. Though the signed average state does not appear to be a physical state in the same way that the average state is, the relative sign in \eqref{eq:signedbranchaverage} is not an arbitrary post-processing rule: it is the eigenvalue of the SWAP outcome recorded by the ancilla. Thus, the signed average can be implemented as the measurement of a physical observable by combining the outcome record with the system state.


The signed average can be equivalently implemented using a physical state and conventional expectation values using the following approach. If the ancilla or classical SWAP-test record is retained, the resolved state may be written as
\begin{align}
\widetilde{\rho}
& = P_+ |0 \rangle \langle 0 | \otimes \rho_+  + P_- |1 \rangle \langle 1 | \otimes \rho_-  \nonumber \\
& = \frac{1}{2} (  I \otimes \rho +  Z \otimes \rho^2 ) .
\end{align}
For any observable $O$ on the system register,
\begin{align}
\langle I \otimes O \rangle &  = \Tr ( I \otimes O \widetilde{\rho} ) = \Tr ( O \rho ) , \label{line1ioeq}  \\
\langle Z \otimes O \rangle &  = \Tr ( Z \otimes O \widetilde{\rho} ) = \Tr ( O \rho^2 )  . \label{line2ioeq}
\end{align}
Hence, the $Z$-weighted record picks out the purified component, which is encoded as a sector of the enlarged system. In practice, one may either keep the ancilla coherently until the relevant measurement, or measure it and store the corresponding classical sign as in (\ref{eq:signedbranchaverage}). These two descriptions give the same estimator for $\Tr(O\rho^2)$. For compactness, we henceforth use the latter implementation. It should be emphasized, however, that the same quantity may be obtained directly from the physically enlarged state by measuring the observable $Z\otimes O$, as in ~\eqref{line2ioeq}. It follows, then, that the same SWAP primitive supports two uses: In the postselected mode, one obtains a physically purified branch; in the non-postselected mode, one obtains physical branch dynamics together with signed access to expectation values of the purified state $\rho^2/\Tr(\rho^2)$. The multi-round protocol below generalizes (\ref{eq:signedbranchaverage}) from $\rho^2$ to higher powers $\rho^{2^\ell}$.

\subsection{Multiple purification rounds}

We now generalize the single-round identity to multiple purification rounds, as illustrated in Fig.~\ref{diag:main_use_case}(a). For the moment, we disregard the algorithmic part of the circuit and set all unitary blocks $U_n$ to the identity. After $\ell$ rounds of binary-tree purification, there are $2^\ell-1$ SWAP-test outcomes. We denote the outcome string by
\begin{align}
    \vec{\sigma}_\ell = (\sigma_1, \sigma_2, \dots , \sigma_{2^{\ell} - 1 }  ) ,
\end{align}
where $\sigma_i=\pm1$, with $+1$ corresponding to the symmetric branch and $-1$ corresponding to the antisymmetric branch. The conditional state associated with the outcome string $\vec{\sigma}_\ell$ is denoted $\rho_{\vec{\sigma}_\ell}$, and its probability is denoted $P_{\vec{\sigma}_\ell}$. These probabilities are defined recursively. For one round,
\begin{align}
P_{\sigma_1}=\frac{1}{2}\left(1+\sigma_1\Tr(\rho^2)\right).
\end{align}
If two subtrees have outcome strings $\vec{\alpha}$ and $\vec{\beta}$, probabilities $P_{\vec{\alpha}}$ and $P_{\vec{\beta}}$, and conditional output states $\rho_{\vec{\alpha}}$ and $\rho_{\vec{\beta}}$, then the probability of the merged branch $(\vec{\alpha},\vec{\beta},\sigma)$ is
\begin{align}
P_{(\vec{\alpha},\vec{\beta},\sigma)}
=
P_{\vec{\alpha}}P_{\vec{\beta}}\,
\frac{1}{2}\left(1+\sigma\Tr[\rho_{\vec{\alpha}}\rho_{\vec{\beta}}]\right).
\label{eq:recursivebranchprob}
\end{align}
At each internal node, two branch-dependent input states are combined by the SWAP gadget. For the two input states $\rho_{\vec{\alpha}}$ and $\rho_{\vec{\beta}}$, (\ref{eq:generalswapgadget}) gives
\begin{align}
\rho_{(\vec{\alpha},\vec{\beta},\sigma)}
=
\frac{\rho_{\vec{\alpha}}+\rho_{\vec{\beta}}
+\sigma(\rho_{\vec{\alpha}}\rho_{\vec{\beta}}+\rho_{\vec{\beta}}\rho_{\vec{\alpha}})}
{4P_{\sigma|\vec{\alpha},\vec{\beta}}},
\label{rhorecursive}
\end{align}
where
\begin{align}
P_{\sigma|\vec{\alpha},\vec{\beta}} =
\frac{1}{2}\left(1+\sigma\Tr[\rho_{\vec{\alpha}}\rho_{\vec{\beta}}]\right).
\label{eq:P_sigma_given_ab}
\end{align}
This recursion implies that every conditional branch state is a polynomial in the original input state $\rho$, divided by a scalar normalization factor. More precisely, after $\ell$ rounds, each branch state is a polynomial in $\rho$ of degree at most $2^\ell$. This follows by induction: at $\ell=1$, (\ref{eq:generalswapgadget}) contains only $\rho$ and $\rho^2$; if two input branch states are polynomials in $\rho$ of degree at most $2^\ell$, then (\ref{rhorecursive}) forms a linear combination of those polynomials and their products, whose degree is at most $2^{\ell+1}$.

For example, after $\ell=2$ rounds starting from $\rho^{\otimes 4}$, let $\sigma_1$ and $\sigma_2$ be the first two SWAP outcomes and let $\sigma_3$ be the final SWAP outcome. It is clearest to write the unnormalized branch state,
\begin{align}
P_{\sigma_1\sigma_2\sigma_3}\rho_{\sigma_1\sigma_2\sigma_3}
=
\frac{1}{8}\Big[
&(P_{\sigma_1}+P_{\sigma_2})\rho
+(\sigma_1P_{\sigma_2}+\sigma_2P_{\sigma_1})\rho^2 \nonumber\\
&+\sigma_3\rho^2
+\sigma_3(\sigma_1+\sigma_2)\rho^3
+\sigma_1\sigma_2\sigma_3\rho^4
\Big],
\label{eq:l2_unnormalized_branch}
\end{align}
where $P_{\sigma_i}=\frac12(1+\sigma_i\Tr\rho^2)$ for the two lower branches, and $P_{\sigma_1\sigma_2\sigma_3}$ denotes the full probability of the complete three-outcome branch. Similar to the $ \ell = 1 $ case of (\ref{eq:ordinarybranchaverage}), taking an average of all the outcomes simply produces the original state
\begin{align}
\sum_{\sigma_1 \sigma_2 \sigma_3} P_{\sigma_1\sigma_2\sigma_3} \rho_{\sigma_1 \sigma_2 \sigma_3 }  = \rho . 
\end{align}
Note that all other terms evaluate to zero since $ \sum_{\sigma_i} \sigma_i = 0 $. We may extract the higher powers of $ \rho $ by multiplying by suitable factors of $ \sigma_i $ and using the fact that $ \sigma^2_i =1 $. The important feature is the last term: the highest power $\rho^4$ is tagged by the total parity $\sigma_1\sigma_2\sigma_3$. Therefore,
\begin{align}
\sum_{\sigma_1,\sigma_2,\sigma_3}
P_{\sigma_1\sigma_2\sigma_3}\,
\sigma_1\sigma_2\sigma_3\,
\rho_{\sigma_1\sigma_2\sigma_3}
=
\rho^4 .
\end{align}
All lower powers vanish under the parity sum. The key point is that the highest power can be isolated by multiplying each branch by the total parity of all SWAP outcomes,
\begin{align}
    \Omega_{\vec{\sigma}_{\ell}} = \prod_{i=1}^{2^\ell-1}\sigma_i .
\end{align}
Here, $\Omega_{\vec{\sigma}_{\ell}}=+1$ if an even number of antisymmetric outcomes occurred and $\Omega_{\vec{\sigma}_{\ell}}=-1$ if an odd number occurred.

\begin{theorem}[]
For $2^\ell$ identical input copies of $\rho$ arranged in the binary-tree SWAP-test circuit, the ordinary branch average gives
\begin{align}
\sum_{\vec{\sigma}_{\ell}} P_{\vec{\sigma}_{\ell}} \rho_{\vec{\sigma}_{\ell} } = \rho ,
\label{avstategen}
\end{align}
while the total-parity-weighted branch average gives
\begin{align}
\sum_{\vec{\sigma}_{\ell}}  P_{\vec{\sigma}_{\ell}} \Omega_{\vec{\sigma}_{\ell}}    \rho_{\vec{\sigma}_{\ell} } = \rho^{2^{\ell}} .
\label{purifiedrhol}
\end{align}
\end{theorem}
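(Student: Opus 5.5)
Both identities follow by induction on $\ell$, carried out on the \emph{unnormalized} branch operators $\tau_{\vec\sigma_\ell} := P_{\vec\sigma_\ell}\rho_{\vec\sigma_\ell}$. Using these removes all normalization denominators. The key observation is that, by (\ref{eq:recursivebranchprob}) and (\ref{rhorecursive}), the factor $4P_{\sigma|\vec\alpha,\vec\beta}$ in the denominator of (\ref{rhorecursive}) cancels against the conditional probability in (\ref{eq:recursivebranchprob}). This gives the bilinear recursion
\begin{align}
\tau_{(\vec\alpha,\vec\beta,\sigma)} = \tfrac14\Big[ P_{\vec\beta}\,\tau_{\vec\alpha} + P_{\vec\alpha}\,\tau_{\vec\beta} + \sigma\big(\tau_{\vec\alpha}\tau_{\vec\beta}+\tau_{\vec\beta}\tau_{\vec\alpha}\big)\Big],
\label{eq:taurecursion}
\end{align}
where we used $P_{\vec\alpha}P_{\vec\beta}\rho_{\vec\alpha}=P_{\vec\beta}\tau_{\vec\alpha}$ and $P_{\vec\alpha}P_{\vec\beta}\rho_{\vec\alpha}\rho_{\vec\beta}=\tau_{\vec\alpha}\tau_{\vec\beta}$. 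The base case $\ell=1$ is exactly (\ref{eq:ordinarybranchaverage}) and (\ref{eq:signedbranchaverage}). One can also take $\ell=0$, with a single copy, empty outcome string, $\tau=\rho$, $P=1$ and $\Omega=1$; then the $\ell=1$ case is itself an instance of the inductive step.

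For the inductive step, the $2^{\ell+1}$-copy tree is a root SWAP gadget whose two inputs are independent $2^\ell$-copy subtrees. Their outcome strings $\vec\alpha,\vec\beta$ range independently over $\{\pm1\}^{2^\ell-1}$, and the full string is $(\vec\alpha,\vec\beta,\sigma)$. The total parity factorizes as $\Omega_{(\vec\alpha,\vec\beta,\sigma)}=\Omega_{\vec\alpha}\Omega_{\vec\beta}\sigma$. By the induction hypothesis each subtree satisfies $\sum_{\vec\alpha}\tau_{\vec\alpha}=\rho$ and $\sum_{\vec\alpha}\Omega_{\vec\alpha}\tau_{\vec\alpha}=\rho^{2^\ell}$. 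We also need the scalar facts $\sum_{\vec\alpha}P_{\vec\alpha}=1$ and $\sum_{\vec\alpha}\Omega_{\vec\alpha}P_{\vec\alpha}=\Tr\rho^{2^\ell}$, which follow by taking the trace of those two identities.

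We then sum (\ref{eq:taurecursion}) over $\vec\alpha,\vec\beta,\sigma$.

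\emph{Ordinary average.} The $\sigma$-linear term vanishes because $\sum_\sigma\sigma=0$. The remaining terms each get a factor $2$, giving $\tfrac14\cdot 2\,(1\cdot\rho+1\cdot\rho)=\rho$.

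\emph{Parity-weighted average.} Multiply by $\Omega_{\vec\alpha}\Omega_{\vec\beta}\sigma$. The first two terms now carry a single factor of $\sigma$ and vanish under $\sum_\sigma$. In the last term $\sigma^2=1$, so $\sum_\sigma$ gives $2$. The double sum then factorizes, because the two subtrees are summed independently:
\[
\tfrac14\cdot 2\Big[\Big(\textstyle\sum_{\vec\alpha}\Omega_{\vec\alpha}\tau_{\vec\alpha}\Big)\Big(\textstyle\sum_{\vec\beta}\Omega_{\vec\beta}\tau_{\vec\beta}\Big)+(\vec\alpha\leftrightarrow\vec\beta)\Big]=\tfrac12\big(\rho^{2^\ell}\rho^{2^\ell}+\rho^{2^\ell}\rho^{2^\ell}\big)=\rho^{2^{\ell+1}}.
\]
This completes the induction for both (\ref{avstategen}) and (\ref{purifiedrhol}).

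The main obstacle is justifying (\ref{eq:taurecursion}) cleanly. One must check that the branch probability for the merged tree is indeed the product $P_{\vec\alpha}P_{\vec\beta}P_{\sigma|\vec\alpha,\vec\beta}$, and that the two subtrees are genuinely independent, acting on disjoint copies with product input $\rho_{\vec\alpha}\otimes\rho_{\vec\beta}$. Both hold because the subtrees act on disjoint registers starting from $\rho^{\otimes 2^{\ell+1}}$, and because the branch states are the conditional states fed into the general gadget formula (\ref{eq:generalswapgadget}). Once this multilinear structure is in place, the sign sums are purely combinatorial. Note that the normalization factors $\Tr[\rho_{\vec\alpha}\rho_{\vec\beta}]$, which make the normalized states nonpolynomial, never appear after multiplying through by probabilities.
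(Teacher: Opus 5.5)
Your proof is correct and follows essentially the same route as the paper: induction on the tree depth using the unnormalized one-step gadget identity $P_\sigma\rho_\sigma(X,Y)=\tfrac14[X+Y+\sigma(XY+YX)]$, the factorization $\Omega=\Omega_{\vec\alpha}\Omega_{\vec\beta}\sigma$, and the independence of the two subtrees, which lets the double sum factorize. Working with $\tau_{\vec\sigma}=P_{\vec\sigma}\rho_{\vec\sigma}$ makes explicit the subtree-probability weighting that the paper's proof leaves implicit. The scalar identity $\sum_{\vec\alpha}\Omega_{\vec\alpha}P_{\vec\alpha}=\Tr\rho^{2^\ell}$ that you list is harmless but not actually needed.
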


\begin{proof}
We first record the one-step identities for arbitrary input states $X$ and $Y$. From (\ref{eq:generalswapgadget}),
\begin{align}
P_{\sigma}\rho_{\sigma}(X,Y) = \frac{1}{4} \left[ X+Y+\sigma(XY+YX) \right],
\end{align}
and hence
\begin{align}
\sum_{\sigma=\pm1}P_{\sigma}\rho_{\sigma}(X,Y)
&=
\frac{X+Y}{2},
\label{onestepordinary}\\
\sum_{\sigma=\pm1}\sigma P_{\sigma}\rho_{\sigma}(X,Y)
&= \frac{XY+YX}{2}.
\label{onestepsigned}
\end{align}
Equation~(\ref{onestepordinary}) proves (\ref{avstategen}) by induction: at each internal node, the ordinary average of the parent is the average of the ordinary averages of its two children, and all leaves are equal to $\rho$.

We now prove (\ref{purifiedrhol}). For $\ell=1$, (\ref{onestepsigned}) with $X=Y=\rho$ gives
\begin{align}
\sum_{\sigma=\pm1}\sigma P_\sigma\rho_\sigma(\rho,\rho)=\rho^2,
\end{align}
which is (\ref{eq:signedbranchaverage}). Suppose that, after $\ell$ rounds, the signed branch average of each subtree is $\rho^{2^\ell}$. At the next level, the total parity is the product of the left-subtree parity, the right-subtree parity, and the parent outcome $\sigma$. Using (\ref{onestepsigned}) and summing over the left and right subtree outcomes gives
\begin{align}
\sum_{\vec{\sigma}_{\ell+1}} P_{\vec{\sigma}_{\ell+1}} \Omega_{\vec{\sigma}_{\ell+1}} \rho_{\vec{\sigma}_{\ell+1}} &=
\frac{1}{2} \left[ \left(\rho^{2^\ell}\right)\left(\rho^{2^\ell}\right) + \left(\rho^{2^\ell}\right)\left(\rho^{2^\ell}\right) \right] \nonumber\\
&= \rho^{2^{\ell+1}} .
\end{align}
This completes the induction. 
\end{proof}

Eq.~(\ref{purifiedrhol}) is the central SWAP-outcome identity. The highest power $\rho^{2^\ell}$ appears with unit coefficient in the parity-weighted branch average, just as $\rho$ appears with unit coefficient in the ordinary branch average of (\ref{avstategen}). The purification power is therefore not produced by exponentially small amplitudes in the branch algebra. The practical overhead enters through the number of consumed copies and through the variance of the signed estimator. As the normalized version of the $\ell$-fold purified state, we define
\begin{align}
{\cal P}_{\ell} (\rho) := \frac{\rho^N}{\Tr ( \rho^N)} = \frac{ \sum_i \lambda_i^N  |\lambda_i \rangle \langle \lambda_i | }
{\sum_i \lambda_i^N},
\qquad N=2^\ell .
\label{eq:purifiedstate}
\end{align}
This is the same normalized power state accessed in virtual distillation and derangement-based error suppression~\cite{Huggins2021,Koczor_derangement}. The formulation developed here adds the physical sector picture: $\rho^N$ is selected by an explicit SWAP-outcome parity record generated from pairwise branch maps. This representation is useful for streaming copies, register recycling, interleaving with unitary circuit blocks, and repeated noise--purification cycle analysis, as we show later in Sec.~\ref{sec:global_depol}-\ref{sec:local_anisotropic_errors}. Other implementations of related multivariate trace-estimation tasks include reset-based qubit-efficient constructions and constant-depth cyclic-shift estimators \cite{czarnik_qubit_eff,Quek_2024_const_depth}.

\subsection{Extracting observables}
\label{sec:observables}

We now discuss how observables are evaluated from the purified state. For any observable $O$, the expectation value with respect to (\ref{eq:purifiedstate}) is
\begin{align}
    \langle O \rangle_\ell = \frac{\Tr ( O \rho^N)}{\Tr (  \rho^N)}.
    \label{oexp}
\end{align}
Substituting (\ref{purifiedrhol}) into (\ref{oexp}) gives
\begin{align}
    \langle O \rangle_\ell
    =
    \frac{\sum_{\vec{\sigma}_{\ell}} P_{\vec{\sigma}_{\ell}} \Omega_{\vec{\sigma}_{\ell}}  \langle O \rangle_{\vec{\sigma}_{\ell} } }
    {\sum_{\vec{\sigma}_{\ell}} P_{\vec{\sigma}_{\ell}} \Omega_{\vec{\sigma}_{\ell}}   } ,
    \label{purifiedoexp}
\end{align}
where
\begin{align}
 \langle O \rangle_{\vec{\sigma}_{\ell} } := \Tr ( O \rho_{\vec{\sigma}_{\ell} })
 \label{standardoexp}
\end{align}
is the expectation value of $O$ with respect to the conditional branch state. The denominator in (\ref{purifiedoexp}) is
\begin{align}
\sum_{\vec{\sigma}_{\ell}} P_{\vec{\sigma}_{\ell}} \Omega_{\vec{\sigma}_{\ell}} =
\Tr(\rho^N),
\end{align}
which is the normalization factor of the power-purified state.

From (\ref{purifiedoexp}), we may estimate the sampling overhead associated with PQES. In addition to the statistics required to obtain the branch expectation values in (\ref{standardoexp}), one also needs statistics to estimate both the numerator and denominator in (\ref{purifiedoexp}). The standard error of (\ref{purifiedoexp}) is given by (see Appendix~\ref{app:overhead})
\begin{align}
  \epsilon & \approx \frac{ \sqrt{ \mathrm{Var}\!\left[ \Omega_{\vec{\sigma}_{\ell}} \left( \langle O \rangle_{\vec{\sigma}_{\ell}} - \langle O \rangle_\ell \right) \right]}} {\sqrt{N_\text{samp}} \Tr (\rho^N)},
  \label{standarderror}
\end{align}
where $N_\text{samp}$ is the number of samples, and the variance is taken with respect to the probability distribution $P_{\vec{\sigma}_{\ell}}$. For the case that $O$ is normalized so that $\|O\|_\infty\le 1$, such as for Pauli strings, the numerator of (\ref{standarderror}) can be bounded above by $2$ (see Appendix~\ref{app:overhead}). Rearranging for $N_{\mathrm{samp}}$ gives the sampling requirement
\begin{equation}
\label{simplifiedstandarderror}
N_{\mathrm{samp}} \gtrsim  \frac{4}{\epsilon^2 [\Tr(\rho^N)]^2}.
\end{equation}
We see that there is an additional sampling overhead that increases with the impurity of the state and with the purification power $N$. For perfectly pure input states, only the symmetric SWAP outcome occurs and $\Tr(\rho^N)=1$. For highly mixed states, $\Tr(\rho^N)$ can become small, increasing the number of samples required to resolve the signed estimator.

\subsection{Purifying quantum circuits}

Up to this point, we have not considered the algorithmic part of the purification circuit, represented by the unitary blocks in Fig.~\ref{diag:main_use_case}(a). The recursive SWAP-test construction is naturally compatible with unitary evolution applied identically to all live copies. If a unitary block $U_n$ is applied in parallel to each copy before a SWAP layer, then each input state is updated as
\begin{align}
\rho \rightarrow \rho' =  U_n \rho  U_n^\dagger  .
\end{align}
The branch identities above then hold with the replacement $\rho\rightarrow \rho'$. In particular,
\begin{align}
{\cal P}_{\ell}(U_n\rho U_n^\dagger)
=
U_n {\cal P}_{\ell}(\rho) U_n^\dagger ,
\end{align}
because $(U_n\rho U_n^\dagger)^N=U_n\rho^N U_n^\dagger$. Thus, for ideal unitary blocks, the purification operation commutes with the algorithmic evolution. This allows the quantum algorithm to be incorporated into the binary-tree purification architecture. One may either prepare multiple noisy copies of the final circuit output and apply the recursive SWAP-test estimator at the end, or interleave common unitary blocks between purification layers so that the copies are updated in parallel as the computation proceeds. The interleaved form is especially natural when copies are produced or refreshed in a streaming architecture, or when one wishes to estimate power-purified observables of selected intermediate states as well as the final output. In this sense, the method provides a flexible purification-QES architecture: the same signed-branch identities apply at any stage where the SWAP inputs represent corresponding noisy copies of the same target state.




\begin{figure}[t]
  \centering
  \includegraphics[width=\linewidth]{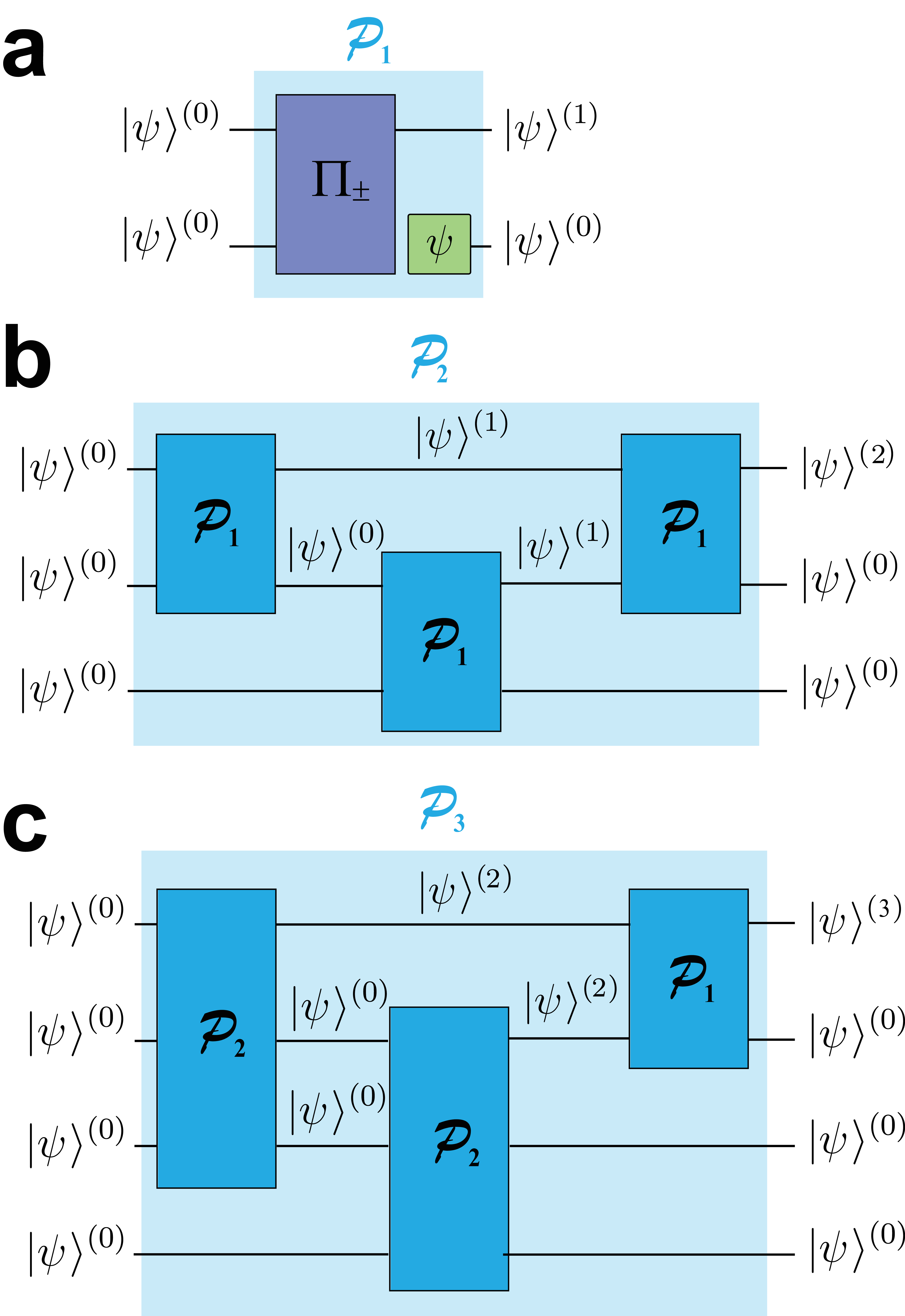}
  \caption{Qubit-recycled implementation of the PQES scheme. Purification sequences for (a) $\ell=1$; (b) $\ell=2$; (c) $\ell=3$ are shown. The block ${\cal P}_\ell$ denotes a circuit that consumes $N=2^\ell$ noisy preparations of $\ket{\psi}^{(0)}$ and outputs one physical branch state together with the SWAP-outcome record required for signed averaging. The signed average over many such runs realizes observables of the purified state. In (a), the two registers are first processed by the SWAP gadget of Fig.~\ref{diag:main_use_case}(b). The discarded register is then reinitialized to a fresh $\ell=0$ input state $\ket{\psi}^{(0)}$.}
  \label{diag:eff_qubits}
\end{figure}

\begin{figure*}[t]
  \centering
  \includegraphics[width=\linewidth]{diagrams/fig3_efficient_qubits_interwoven.eps}
  \caption{Constructing interleaved PQES circuits by pushing unitary blocks backwards through the recycled purification circuit. (a) A common unitary on the output of a $\ell=1$ purification step can be moved to identical unitaries on its two inputs. (b) A unitary on a reset/reinitialization branch can be absorbed into the preparation of the fresh copy. (c) Applying this rule to a $\ell=2$ recycled circuit gives an interleaved circuit for one unitary block. (d) Repeating the same backwards construction gives the $\ell=3$ circuit with two interleaved unitary blocks.}
  \label{diag:eff_qubits_interwoven}
\end{figure*}

\section{Reducing qubit resources by recycling}
\label{sec:protocol}

\subsection{Qubit recycling}
The fully parallel binary-tree PQES circuit, such as shown in Fig.~\ref{diag:main_use_case}(a), uses $N=2^\ell$ noisy copies of the state. This gives a simple depth-efficient binary tree, but it requires an exponentially large number of data registers when the purification power $N$ is increased. In this section, we show how the same SWAP-test tree can be generated sequentially by recycling registers. The number of coherent data qubits is then reduced from $O(M 2^\ell)$ to $O(M\ell)$ for an $M$-qubit register, at the cost of increasing the sequential SWAP-test depth.

In this binary-tree implementation, all $N=2^\ell$ noisy instances of the relevant $M$-qubit register are prepared simultaneously. This implementation therefore uses
\begin{align}
    M\,2^\ell = MN = O(MN)
    \label{eq:treescaling}
\end{align}
data qubits, plus one SWAP-test ancilla for each internal node if all tests in a layer are performed in parallel. Since the tree has $N-1$ internal nodes, the total number of register-level SWAP tests is $N-1$. However, these tests are arranged into only $\ell$ binary-tree layers, so the SWAP-test depth is
\begin{align}
d_{\mathrm{bt}} = O(\ell\, d_{\mathrm{SWAP}}),
\end{align}
where $d_{\mathrm{SWAP}}$ is the depth required to implement one register-level SWAP test on two $M$-qubit registers. The value of $d_{\mathrm{SWAP}}$ is architecture dependent: it is $O(1)$ if the $M$ Fredkin gates within a register-SWAP test can be parallelized with suitable control distribution, and it can scale as $O(M)$ in a serial controlled-SWAP implementation. Thus, Fig.~\ref{diag:main_use_case}(a) should be interpreted as prioritizing depth over width for the signed-outcome SWAP estimator.

A more qubit-efficient implementation is shown in Fig.~\ref{diag:eff_qubits}. In this implementation, registers are recycled so that the full binary tree is generated sequentially rather than stored simultaneously. Here, ${\cal P}_\ell$ denotes a level-$\ell$ implementation that consumes $N=2^\ell$ noisy preparations and returns one branch state together with the complete SWAP-outcome record needed for signed averaging. The signed average over the output records realizes the state ${\cal P}_\ell(\rho)$.

The recycled construction is recursive. A level-$\ell$ instance ${\cal P}_\ell$ first uses ${\cal P}_{\ell-1}$ to generate one level-$(\ell-1)$ branch output, stores that register, then reuses the remaining registers to generate a second level-$(\ell-1)$ branch output, and finally applies one SWAP gadget to merge the two. By induction, this requires only $\ell+1$ live data registers. Hence the coherent data footprint is
\begin{align}
(\ell+1)M = O(M\log_2 N),
\label{eq:recycledwidth}
\end{align}
rather than the $O(MN)$ footprint of (\ref{eq:treescaling}). The price is depth. Let $G_\ell$ denote the number of register-level SWAP-test gadgets executed in a recycled level-$\ell$ implementation. Since the two calls to ${\cal P}_{\ell-1}$ are performed sequentially, the register-level SWAP-test count obeys
\begin{align}
G_\ell = 2G_{\ell-1}+1,\qquad G_0=0,
\end{align}
and therefore
\begin{align}
G_\ell=2^\ell-1=N-1.
\label{eq:swapcount}
\end{align}
Thus, the sequential SWAP-test depth of the fully recycled construction scales as
\begin{align}
d_{\mathrm{recycled}} = O(N\,d_{\mathrm{SWAP}}),
\label{eq:recycleddepth}
\end{align}
up to the depth required to prepare each noisy input copy. The same binary tree of SWAP merges is realized in both implementations; the difference is whether the tree is stored in space or generated in time.

Intermediate realizations are also possible. One may keep several subtrees live in parallel to reduce depth while still using fewer than $MN$ data qubits, or one may recycle aggressively to minimize coherent memory at the cost of longer PQES segments. This width--depth tradeoff is useful because the best implementation depends on the hardware: memory errors, reset speed, connectivity, controlled-SWAP fidelity, and the cost of preparing fresh noisy copies may all determine the optimal schedule.


The resource count above is the cost of the particular architecture studied here: a recursive circuit built from pairwise SWAP tests, recycled registers, and an explicit SWAP-outcome record. This should not be confused with the minimum possible width for estimating traces of powers of a density matrix. If the only task is to estimate quantities such as $\Tr(O\rho^N)$ or $\Tr(\rho^N)$ at the end of a circuit, specialized trace-estimation protocols can be more space- or depth-efficient. For example, reset-based constructions can reduce the coherent width of virtual-distillation estimators to $2M+1$ qubits~\cite{czarnik_qubit_eff}, and multivariate trace-estimation circuits can estimate cyclic-shift traces in constant quantum depth using highly parallelized controlled permutations and GHZ-type control resources~\cite{Quek_2024_const_depth}.

\subsection{Interleaving quantum algorithms}
We now explain how to construct the interleaved circuits shown in Fig.~\ref{diag:eff_qubits_interwoven}(a)(b). The basic object is the normalized SWAP-gadget map from (\ref{eq:generalswapgadget}). For two input states $X$ and $Y$, the resultant state, $\rho_{\sigma}(X,Y)$, and the probability of that state, $P_{\sigma|X,Y}$, are given by (\ref{rhorecursive}) and (\ref{eq:P_sigma_given_ab}), respectively. The SWAP projectors commute with a common unitary applied to both registers,
\begin{align}
    [ \Pi_\pm, U \otimes U ] = 0 .
    \label{eq:bilateralcommute}
\end{align}
Equivalently, the branch maps satisfy
\begin{align}
\rho_{\sigma}(UXU^\dagger, UYU^\dagger) = U \rho_{\sigma}(X,Y) U^\dagger .
\label{eq:branchcovariance}
\end{align}
This identity is shown in Fig.~\ref{diag:eff_qubits_interwoven}(a): a common unitary can be moved through a SWAP purification step and applied to the surviving output register. In the $\ell=1$ purification sequence shown in Fig.~\ref{diag:eff_qubits}(a), one of the two registers is discarded after the SWAP gadget, so only one copy of $U$ appears on the output side of the ${\cal P}_1$ box. The identity in Fig.~\ref{diag:eff_qubits_interwoven}(b) expresses the complementary freedom to absorb a unitary into the preparation of a fresh copy: instead of preparing the lower register in the initial state $|\psi\rangle^{(0)}$, one may prepare the already-updated state $U|\psi\rangle^{(0)}$.


The interleaved circuit should be constructed backwards. Start from the recycled purification circuit with the desired algorithmic unitary block placed at the output. Then move the unitary backwards through the circuit using (\ref{eq:branchcovariance}). Whenever the unitary crosses a SWAP gadget, it becomes the same unitary applied to both input registers of that gadget. Whenever it lands on a register that is freshly prepared after a reset, it is absorbed into the preparation of that fresh copy, as shown in Fig.~\ref{diag:eff_qubits_interwoven}(b). Repeating this procedure gives the interleaved circuits in Fig.~\ref{diag:eff_qubits_interwoven}(c)(d).

The depth-efficient binary-tree circuit in Fig.~\ref{diag:main_use_case}(a) is obtained by the same rule. One may start from a tree in which the unitary block is placed after purification, then commute the unitary backwards through each SWAP layer. Each time it crosses a SWAP test, it splits into identical unitary blocks on the two input copies. 

This backwards construction is important in the recycled implementation. At intermediate times, the live registers can sit at different levels of the recursion. The circuit must therefore preserve the rule that a level-$(k+1)$ purification step combines two corresponding level-$k$ outputs. Pushing the unitaries backwards through the already-defined purification tree guarantees that the same recursive structure is maintained.



There are two limiting schedules. In an output-state purification schedule, one first prepares multiple noisy copies of the final circuit output and then applies the SWAP-test purification tree. In an interleaved schedule, purification layers are inserted between circuit blocks so that errors are suppressed before the full circuit has been completed. For ideal unitaries and ideal SWAP tests, the backwards construction above gives equivalent power-state observables. For realistic circuits, however, the placement of the SWAP layers matters because different schedules expose different intermediate states to noise. In that case, interleaving may alter both the bias and the variance of the final signed estimator. Determining whether intermediate placement gives a genuine error-suppression advantage requires a circuit-level noise model that includes the preparation of corresponding intermediate copies, memory noise, SWAP-test errors, and reset errors. 




\section{Error-model independent performance}


We now analyze the performance of PQES independently of any particular error model. This section contains four general ingredients used in the noise-model calculations below: the single-qubit Bloch-vector update, the spectral action and fixed points of the purification map, general fidelity bounds, and the definition of the error threshold for PQES. These results clarify as to when increasing the number of SWAP-test purification rounds improves the target-state fidelity.

\subsection{Noisy single qubit}
\label{sec:noisyqubit}

As a first example of PQES in action, it is illustrative to consider a single noisy qubit. An arbitrary noisy qubit state can be written
\begin{align}
\rho=\tfrac{1}{2}(I+\vec r\cdot\vec\sigma) .
\label{eq:noisysinglequbit}
\end{align}
Let $r=|\vec r|$ and, for $r\neq 0$, define $\hat r=\vec r/r$. For one purification round, substituting (\ref{eq:noisysinglequbit}) into (\ref{eq:purifiedstaten2}) gives (see Appendix~\ref{app:paulibasis})
\begin{equation}
\label{eq:bloch-map}
    {\cal P} (\rho)
    =\frac{1}{2}\left(I+\frac{2}{1+r^2}\,\vec r\cdot\vec\sigma\right) .
\end{equation}
Equivalently, the Bloch vector is updated as
\begin{align}
\label{eq:qubitradialupdate}
\vec r \rightarrow \frac{2}{1+r^2}\,\vec r .
\end{align}

The multi-round version follows directly from the spectral form of (\ref{eq:purifiedstate}). The eigenvalues of $\rho$ are $(1+r)/2$ and $(1-r)/2$, with eigenvectors aligned and anti-aligned with $\hat r$. For $N=2^\ell$,
\begin{align}
{\cal P}_{\ell}(\rho) = \frac{1}{2}\left(I+r_N\,\hat r\cdot\vec\sigma\right),
\label{multiroundblochstate}
\end{align}
where
\begin{align}
r_N = \frac{(1+r)^N-(1-r)^N} {(1+r)^N+(1-r)^N}.
\label{multiroundblochradius}
\end{align}
For $N=2$, (\ref{multiroundblochradius}) reduces to (\ref{eq:qubitradialupdate}). Thus, for qubits, PQES implements a radial rescaling in the Bloch sphere. The Bloch direction is preserved and the radius monotonically increases whenever $0<|\vec r|<1$, with fixed points at $|\vec r|=0$ and $|\vec r|=1$. From this behavior, one expects PQES to be particularly effective against depolarizing channels: depolarizing noise contracts the Bloch vector without changing its direction, while purification reverses this contraction by repolarizing along the same axis.

\subsection{Spectral purification and fixed points}
\label{sec:purification}

In general, the purification step in (\ref{eq:purifiedstaten2}) acts only on the spectrum of a state. If
\begin{align}
\rho = \sum_i \lambda_i |\lambda_i\rangle\langle\lambda_i| ,
\end{align}
then
\begin{align}
{\cal P}(\rho) = \frac{\rho^2}{\Tr(\rho^2)} = \sum_i \frac{\lambda_i^2} {\sum_j\lambda_j^2} |\lambda_i\rangle\langle\lambda_i| .
\end{align}
The eigenbasis is unchanged, the ordering of eigenvalues is preserved, and the relative contrast between unequal eigenvalues increases. For $\lambda_i>\lambda_j$, the purified eigenvalues obey
\begin{equation}
\label{eq:ratio_increase}
    \frac{\lambda_i'}{\lambda_j'} = \left(\frac{\lambda_i}{\lambda_j}\right)^2 > \frac{\lambda_i}{\lambda_j} .
\end{equation}
Each purification round subsequently pushes the probability weight toward larger eigenvalues and away from smaller ones. Iterating the map drives the state toward the dominant eigenspace, as described by (\ref{spectralconvergencebound}).


The map also increases the purity, except when the nonzero eigenvalues are already all equal. As shown in Appendix~\ref{app:purification},
\begin{align}
\Tr[{\cal P}(\rho)^2]\ge \Tr(\rho^2),
\label{eq:purity_increase}
\end{align}
with equality if and only if $\rho$ is maximally mixed on its support. This includes pure states: when the support has rank one, the state is already fixed and has purity one. The full-rank maximally mixed state is the opposite extreme. More generally, the fixed points are precisely the states that are uniform on their support.

\subsection{Spectral convergence and the required number of purification rounds}

The convergence of ${\cal P}_{\ell}(\rho)$ is controlled by the spectrum of $\rho$. Let
\begin{align}
\rho=\sum_{i=1}^{D}\lambda_i |\lambda_i\rangle\langle \lambda_i|,
\qquad
\lambda_1>\lambda_2\ge \lambda_3\ge \cdots \ge 0,
\end{align}
where the largest eigenvalue is assumed to be nondegenerate. Setting $N=2^\ell$, (\ref{eq:purifiedstate}) can be rewritten as
\begin{align}
{\cal P}_{\ell}(\rho) = \frac{ |\lambda_1\rangle\langle \lambda_1| + \sum_{i>1}\left(\lambda_i/\lambda_1\right)^N |\lambda_i\rangle\langle \lambda_i|
}{ 1+ \sum_{i>1}\left(\lambda_i/\lambda_1\right)^N }.
\label{spectralpell}
\end{align}
Thus ${\cal P}_{\ell}(\rho)$ converges exponentially in $N$ to the dominant eigenvector $|\lambda_1\rangle$ whenever $\lambda_1>\lambda_2$. Let
\begin{align}
r_\lambda=\frac{\lambda_2}{\lambda_1},
\qquad
R_N=\sum_{i>1}\left(\frac{\lambda_i}{\lambda_1}\right)^N .
\end{align}
Since ${\cal P}_{\ell}(\rho)$ and $|\lambda_1\rangle\langle \lambda_1|$ commute, their trace distance is simply the total weight outside the dominant eigenvector:
\begin{align}
T\!\left({\cal P}_{\ell}(\rho), |\lambda_1 \rangle\langle \lambda_1|\right)
&=
\frac{1}{2} \left\| {\cal P}_{\ell}(\rho)-|\lambda_1 \rangle\langle \lambda_1| \right\|_1 \nonumber\\
&= \frac{R_N}{1+R_N}.
\label{traceexactrN}
\end{align}
Furthermore,
\begin{align}
R_N = \sum_{i>1}\left(\frac{\lambda_i}{\lambda_1}\right)^N \le (D-1)r_{\lambda}^N,
\end{align}
and therefore
\begin{align}
T\!\left({\cal P}_{\ell}(\rho), |\lambda_1 \rangle\langle \lambda_1|\right) \le (D-1) \left(\frac{\lambda_2}{\lambda_1}\right)^{2^\ell}.
\label{spectralconvergencebound}
\end{align}
To guarantee that the finite-round spectral error is at most $\eta$, it is sufficient that
\begin{align}
(D-1)
\left(\frac{\lambda_2}{\lambda_1}\right)^{2^\ell} \le \eta .
\end{align}
Equivalently, since $\lambda_1/\lambda_2>1$,
\begin{align}
2^\ell \ge \frac{\log[(D-1)/\eta]}{\log(\lambda_1/\lambda_2)} .
\end{align}
A sufficient number of purification rounds is then,
\begin{align}
\ell \ge \left\lceil \log_2 \left[ \frac{\log[(D-1)/\eta]} {\log(\lambda_1/\lambda_2)} \right] \right\rceil .
\label{ellbound}
\end{align}
This bound separates the algebraic depth $\ell$ of the recursive SWAP tree from the spectral gap of the noisy state. If the dominant eigenvalue is degenerate, the same argument shows convergence to the normalized projector onto the dominant eigenspace rather than to a unique pure state. In applications, the dominant eigenvector itself may differ from the ideal noiseless state.

\subsection{Fidelity bounds and dominant-eigenvector limitation}

Consider $\rho$, a noisy version of $|\psi\rangle\!\langle\psi|$ produced by a physical noise map prior to purification. For $N=2^\ell$, the fidelity after $\ell$ rounds of PQES is
\begin{equation}
\label{eq:Fq-master}
 F_N := \bra{\psi} {{\cal P}}_\ell (\rho) \ket{\psi} = \frac{\bra{\psi}\rho^N\ket{\psi}}{\Tr(\rho^N)} .
\end{equation}
%

We can obtain general bounds on $F_N$ by analyzing $\bra{\psi}\rho^N\ket{\psi}$. Let
\begin{align}
F := \bra{\psi}\rho\ket{\psi}
\end{align}
be the original fidelity. Writing $\rho=\sum_i\lambda_i |\lambda_i\rangle\langle  \lambda_i|$ and $p_i=|\langle \lambda_i|\psi\rangle|^2$, we have
\begin{align}
\bra{\psi}\rho^N\ket{\psi} = \sum_i p_i\lambda_i^N .
\end{align}
Since $x^N$ is convex on $[0,1]$ for $N\ge1$, Jensen's inequality gives
\begin{align}
\bra{\psi}\rho^N\ket{\psi} = \sum_i p_i\lambda_i^N \ge \left(\sum_i p_i\lambda_i\right)^N = F^N .
\end{align}
On the other hand, $0\le\rho\le I$ implies $\rho^N\le \rho$ in the operator order, and hence
\begin{align}
\bra{\psi}\rho^N\ket{\psi}\le \bra{\psi}\rho\ket{\psi}=F .
\end{align}
Therefore, for every input state $\rho$,
\begin{equation}
\label{eq:Fq-sandwich}
    \frac{F^N}{\Tr(\rho^N)} \le  F_N  \le  \frac{F}{\Tr(\rho^N)} .
\end{equation}
%
%
Specific noise models and state families sharpen this picture by expressing $F$, $\Tr(\rho^N)$, and $\bra{\psi}\rho^N\ket{\psi}$ as explicit functions of the noise parameters.

These bounds should be interpreted together with the dominant-eigenvector limitation of the power map. As $N$ increases, ${\cal P}_\ell(\rho)$ converges to the dominant eigenspace of $\rho$, not necessarily to the ideal noiseless state. If the dominant eigenvector of $\rho$ is the target $|\psi\rangle$, then increasing $\ell$ can drive $F_N$ to unity. If, instead, the dominant eigenvector is shifted away from $|\psi\rangle$, the fidelity saturates at a value determined by this mismatch. This is the coherent-mismatch or dominant-eigenvector floor discussed in virtual distillation and error suppression analyses \cite{Huggins2021,Koczor_derangement,Koczor2021_dominantEV}.

\subsection{Error threshold}
\label{subsec:threshold-performance}

In conventional quantum error correction, an error threshold separates the regime in which increasing the code size suppresses the logical error rate from the regime in which additional encoding no longer improves the logical state. Below such a threshold, the logical error rate may be made arbitrarily small by increasing the redundancy of the code.

PQES has an analogous scalable redundancy parameter. Rather than increasing a code distance, PQES increases the number of noisy copies, $N=2^\ell$, or equivalently, the purification depth $\ell$. We therefore define an error threshold for PQES as the largest physical error rate below which increasing $\ell$ suppresses the effective logical error rate to zero. Throughout the present threshold analysis, the SWAP tests, measurements, resets, and classical processing are treated as ideal. The resulting quantity is therefore analogous to a code-capacity threshold rather than a full circuit-level threshold.




\begin{definition}[Error threshold for PQES]
For a given noise model, we define the PQES error threshold $p_{\mathrm{th}}$ as the largest physical error rate for which the target remains asymptotically recoverable under ideal PQES cycles, i.e., for all $p < p_{\mathrm{th}}$, the effective logical error rate with PQES satisfies
\begin{equation}
    \forall\,p < p_{\mathrm{th}}:\quad \lim_{\ell \to \infty} \gamma_{L} (\ell,p) = 0 ,
    \label{errorthresholddef}
\end{equation}
where $\gamma_{L}(\ell,p)$ is the effective logical error rate extracted from the fidelity decay, and $\ell$ denotes the number of purification rounds per PQES cycle.
\end{definition}

We use the term ``logical error rate'' here as an effective repeated-cycle diagnostic for the decay of the target-state fidelity under alternating noise and ideal PQES layers. To determine the threshold in practice, we follow the procedure as follows:
\begin{enumerate}
    \item[0)] Initialize the state in a suitable pure state $\rho = | \psi_0 \rangle\langle \psi_0 |$. Set the cycle number $t = 0$.
   \item[1)] Apply the error channel
   \begin{align}
   \rho \rightarrow {\cal E} (\rho) .
   \end{align}
 \item[2)] Perform $\ell$ rounds of purification
   \begin{align}
   \rho \rightarrow {\cal P}_{\ell} (\rho) .
   \end{align}
 \item[3)] Measure the fidelity $F= \langle \psi_0 | \rho | \psi_0 \rangle$.
  \item[4)] Update $t \rightarrow t +1$ and go to step 1.
\end{enumerate}
To extract $\gamma_{L}$, we assume that the fidelity $F$ follows an exponential decay with the number of cycles and evaluate
\begin{align}
    \gamma_{L} =  -\left.\frac{dF}{dt} \right|_{t=0}  \approx
    F(t=0) - F(t=1),
    \label{gammadef}
\end{align}
where $t$ is the number of cycles. This gives the initial decay rate of the fidelity after applying $\ell$ rounds of PQES in each cycle. The threshold $p_{\mathrm{th}}$ is then obtained by applying (\ref{errorthresholddef}): below threshold, increasing the purification depth suppresses the effective decay rate toward zero, while above threshold the noisy state is no longer spectrally recoverable by the power map alone.


\section{Global depolarizing errors}
\label{sec:global_depol}

As seen in Sec.~\ref{sec:noisyqubit}, PQES is especially natural for depolarizing noise because the purification map sharpens the spectrum without changing the eigenbasis. A global depolarizing channel is therefore the cleanest setting in which to illustrate the basic recoverability mechanism: the target state remains an eigenvector of the noisy state, and purification amplifies its eigenvalue relative to the uniformly distributed error components. We now analyze this case explicitly.

\subsection{Error channel}

The global depolarizing channel is defined as
\begin{equation}
\label{eq:isotropic-family}
   {\cal E} ( \rho ) = (1-p) \rho + p \frac{I}{D}  ,
\end{equation}
where $ p \in  [0,1] $ is the error probability. Note that we consider the channel (\ref{eq:isotropic-family}) to be applied {\it per quantum register}.  For example, in Fig. \ref{diag:main_use_case},  the quantum channel acts on each copy of $ | \psi_n \rangle^{(\ell)} $.

\begin{figure}[t]
  \centering
  \includegraphics[width=0.78\linewidth]{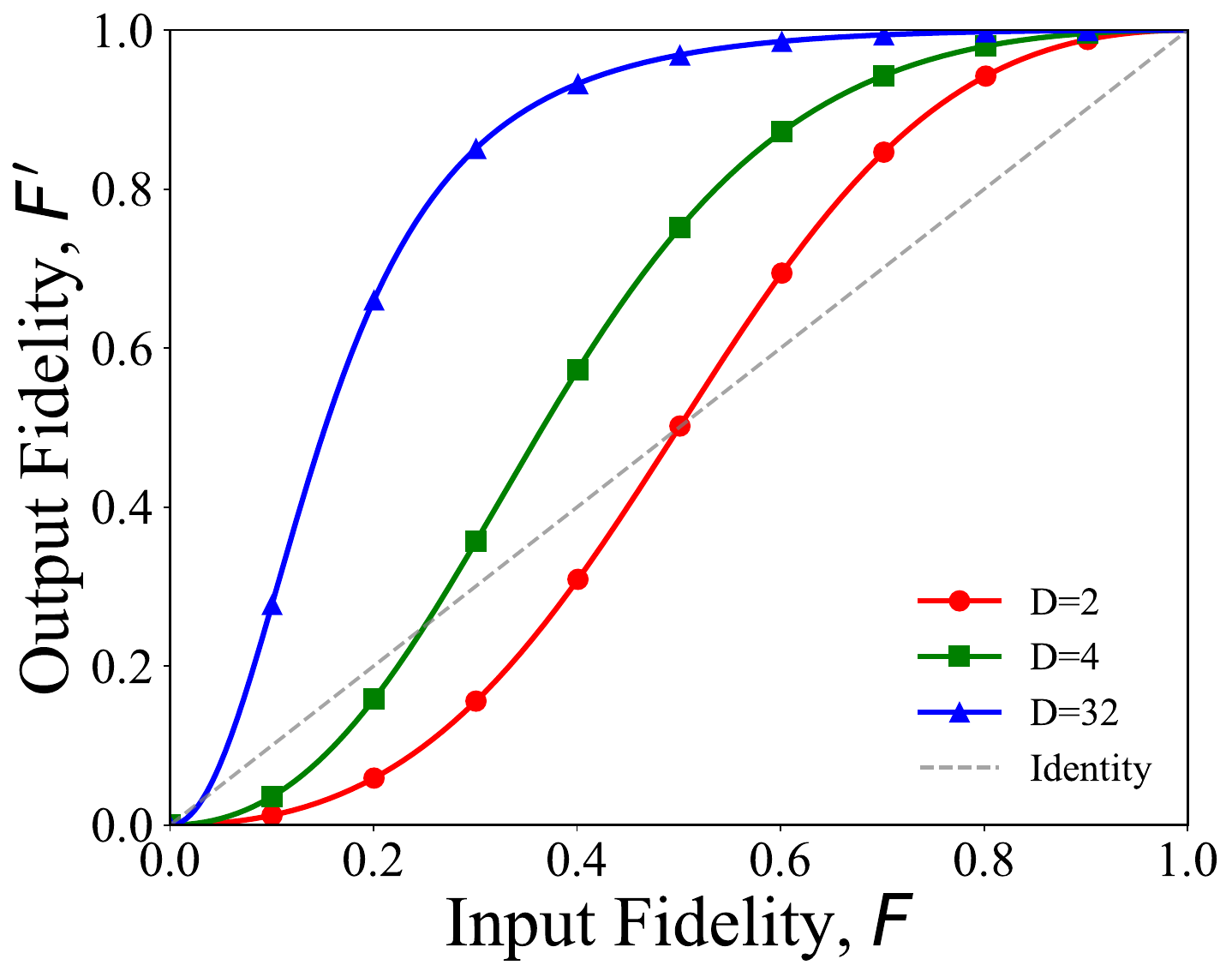}
  \caption{%
   Output fidelity versus input fidelity after a single purification round for the global depolarization channel \eqref{eq:isotropic-family}, for several register dimensions $D$. Curves show the exact map of \eqref{eq:gen_error_reduction} with the gray dashed line indicating the identity $F' = F$. The purification step strictly improves fidelity when $F> \frac{1}{D}$, and the gain is largest at lower $F$ where the curvature is steepest.  }
  \label{fig:fout_vs_f_isotropic}
\end{figure}

\subsection{Purification}

We first examine how the purification operation (\ref{eq:purifiedstate}) improves the fidelity after the channel (\ref{eq:isotropic-family}) acts on a initial state  $ | \psi_0 \rangle \langle \psi_0 | $.  The global depolarizing channel is a particularly simple case to analyze because the channel leaves it a Werner state, i.e. isotropic, at all times
\begin{align}
\label{wernerrho}
\rho = (1- \lambda) | \psi_0 \rangle \langle \psi_0 | + \lambda \frac{I}{D} .
\end{align}
This family is closed under PQES because the state has only two distinct eigenspaces: the target direction $|\psi_0\rangle$ and the orthogonal $(D-1)$-dimensional subspace. Powers of $\rho$ preserve this eigenspace decomposition. The fidelity of the Werner state (\ref{wernerrho}) is 
\begin{align}
F = \langle \psi_0 | \rho | \psi_0 \rangle =   1-\lambda(1 - \frac{1}{D} ) .
\label{fidpdepol}
\end{align}
Substituting (\ref{wernerrho}) into (\ref{eq:generalswapgadget}), we obtain the effect of applying one round of purification

\begin{align}
    \lambda \mapsto \lambda'= \frac{\lambda^2}{D(1-\lambda)^2-\lambda(\lambda-2)} .
    \label{PQESupdateglobal}
\end{align}
This can be equally written in terms of the fidelity using the relation (\ref{fidpdepol}), giving

\begin{align}
\label{eq:gen_error_reduction}
 F \mapsto   F' = \frac{ F^2}{F^2+\frac{(1-F)^2}{D-1}}  .
\end{align}
For a general purification power $N=2^\ell$, the fidelity after PQES is given by
\begin{align}
 F \mapsto F_N = \frac{F^N}{F^N+\dfrac{(1-F)^N}{(D-1)^{N-1}}}.
\label{eq:global_fN}
\end{align}

In Fig. \ref{fig:fout_vs_f_isotropic}, we show the output fidelity for various system sizes. By solving for the crossing point $ F ' = F $, we find that the fidelity improves in the region $ F > 1/D $.  In terms of the Werner mixing parameter, this corresponds to $ \lambda < 1 $.  For large registers $D \gg 1 $, the purification becomes extremely effective and quickly approaches $ F' \approx 1 $ after a single round.  We show in Appendix \ref{app:fidelity_convergence} that if $ F > 1/D $, iterating (\ref{eq:gen_error_reduction}) is strictly increasing and $ F_* = 1 $ is an attractive fixed point. 

Thus, within the Werner family and assuming ideal PQES layers, any state with $\lambda<1$ remains spectrally recoverable: the target eigenvalue is strictly larger than the orthogonal eigenvalues, and the limit $N\to\infty$ gives unit fidelity. At $\lambda=1$, the state is completely mixed and the target information has been erased.

\subsection{PQES error threshold}
\label{subsubsec:single-qubit-example}

\begin{figure}[t]
  \centering
    \includegraphics[width=\linewidth]{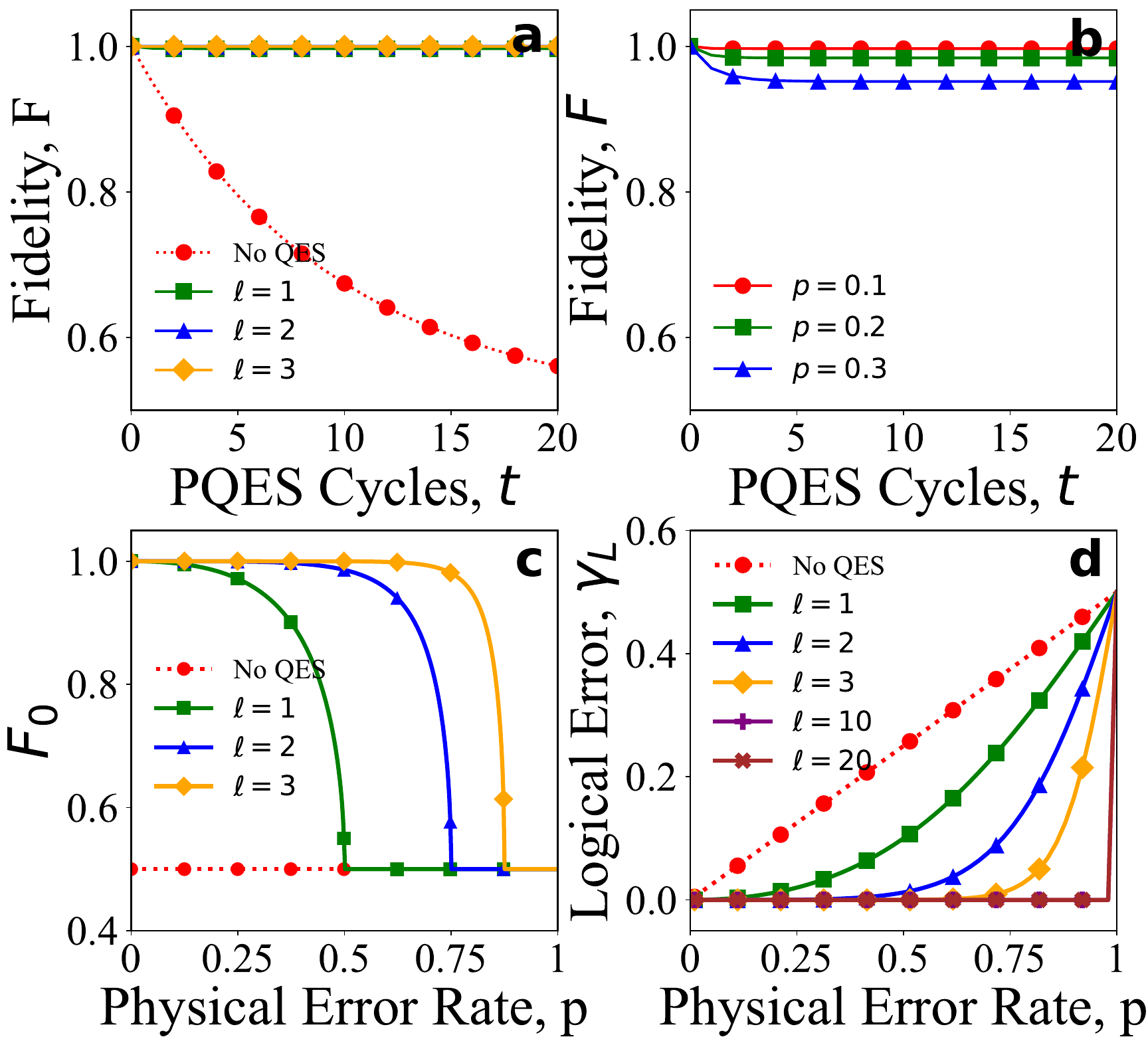}
    \caption{PQES error threshold behavior for a single qubit ($D=2$) in a global depolarizing channel with PQES. (a)(b)  Fidelity evolution under the depolarizing Bloch vector contraction (\ref{eq:isotropic-family}) and $ \ell $ rounds of purification (\ref{eq:qubitradialupdate}), where $ t $ is the number of cycles.  (a) Fidelity evolution versus $ t$ with $ p = 0.1 $ and various  $ \ell $. (b) Fidelity evolution versus $ t$ for $ \ell = 1 $ and various $ p $ as shown.  (c) The steady state ($ t \rightarrow \infty $) fidelity $ F_0 $ for various $ \ell $.  (d) The effective logical error rate $ \gamma_{L} $ evaluated using (\ref{gammadef})  for various $ \ell $. }
    \label{fig:onequbitrec}
\end{figure}

To determine the error threshold under the global depolarizing channel, we follow the procedure given in Sec.~\ref{subsec:threshold-performance}. Under the global depolarizing channel, the Werner state parameter undergoes the update rule
\begin{align}
\label{globaldepolmap}
    \lambda \mapsto (1-p)\lambda + p  .
\end{align}
PQES then updates the Werner parameter according to (\ref{PQESupdateglobal}). The fidelity is then evaluated using (\ref{fidpdepol}).

A typical sequence is shown in Fig. \ref{fig:onequbitrec}(a)(b). We see that the fidelity follows an exponential decay due to the depolarizing channel being applied per cycle.  Without any application of PQES ($ \ell = 0 $), the fidelity approaches the completely mixed state.  As expected, adding additional rounds of purification reduces the error rate.  Interestingly, the fidelities converge to higher values in the limit of a large number of cycles, improving with $ \ell $. For example, in the case of $ \ell =1 $, the steady state fidelity may be found by determining the fixed point of the combined effect of (\ref{PQESupdateglobal}) and (\ref{globaldepolmap}), giving
\begin{align}
    F_0 & := \lim_{t\rightarrow \infty} F(t) \nonumber \\
    & = \frac{1}{2}\left(
    1 + \sqrt{1-\frac{4(D-1)p^2}{D^2(1-p)^2}}
    \right).
\end{align}
For a single qubit, this expression is real for $p\le 1/2$, consistent with the finite-$\ell$ behavior in Fig.~\ref{fig:onequbitrec}. Larger values of $\ell$ extend the range over which the high-fidelity fixed point is maintained, approaching the asymptotic error threshold $p_{\mathrm{th}}=1$ as $\ell\rightarrow\infty$. The steady-state fidelities as a function of $p$ are shown in Fig.~\ref{fig:onequbitrec}(c). This saturation effect preserves fidelity in addition to suppressing the effective logical decay rate.

Figure \ref{fig:onequbitrec}(d) shows the effective logical error rate $ \gamma_L $ as a function of physical error rate $ p $ for various numbers of rounds $ \ell $ of PQES.  We see that, as expected, PQES reduces the effective logical error rate as $ \ell $ increases.  With $ \ell = 20 $, the effective logical error rate is reduced to nearly zero for any $ p < 1$. The PQES error threshold is found by determining the crossing point of the curves in Fig. \ref{fig:onequbitrec}(d), which in this case is 
\begin{align}
    p_{\text{th}} = 1 ,
    \label{pthglobal}
\end{align}
for the global depolarizing channel. This means that for any $p<1$, the target state remains spectrally recoverable in the ideal-purification limit: increasing $\ell$ suppresses the subdominant eigencomponents of the noisy state. Operationally, this statement assumes ideal SWAP-test layers and sufficient copies and samples to resolve the signed estimator.

The spectral origin of this threshold is especially transparent in this model. After one application of the global depolarizing channel to $|\psi_0\rangle\langle\psi_0|$, the target eigenvalue is
\begin{align}
\lambda_{\mathrm{tar}}=1-p+\frac{p}{D},
\end{align}
while each orthogonal eigenvalue is
\begin{align}
\lambda_{\perp}=\frac{p}{D}.
\end{align}
Hence
\begin{align}
\lambda_{\mathrm{tar}}-\lambda_{\perp}=1-p .
\end{align}
For every $p<1$, the target state remains the unique dominant eigenvector of the noisy density matrix. Therefore, in the limit $\ell\rightarrow\infty$, the power map converges back to $|\psi_0\rangle\langle\psi_0|$. At $p=1$, all eigenvalues are equal and the state is completely mixed, so the target information has been erased. This gives the PQES error threshold (\ref{pthglobal}).

Similar results to Fig. \ref{fig:onequbitrec} are obtained for higher dimensions $ D $. The same spectral argument applies: for the global depolarizing channel, the target eigenvalue remains strictly larger than every orthogonal eigenvalue for all $p<1$, while $p=1$ gives the completely mixed state. Hence, the global depolarizing PQES error threshold is $p_{\mathrm{th}}=1$ for every register dimension.


\section{Local depolarizing errors}
\label{sec:local_depol}

The global depolarizing channel analyzed in Sec.~\ref{sec:global_depol} has the advantage of computational simplicity, due to the state being of Werner form at all times. It is, however, a highly symmetric error model, since it is equivalent to a sum of all possible Pauli errors, which involves multiqubit bit/phase flip errors. In order to examine a more realistic case, we now examine local depolarizing errors. Here, each qubit is subjected to an identically and independently distributed single-qubit depolarizing channel.

\subsection{Error channel}

The local depolarizing channel on the $ m $th qubit is defined as
\begin{align}
\label{eq:Dp-def}
    \mathcal E_m (\rho)&=E_m^{(0)} \rho {E_m^{(0)} }^\dagger+\sum_{j=x,y,z}E_m^{(j)} \rho {E_m^{(j)}}^\dagger ,
\end{align}
where the Kraus operators are defined as
\begin{align}
\label{eq:pauli-kraus}
    E_m^{(0)} & =\sqrt{1-p} I_m \nonumber \\
    E_m^{(j)}  & =\sqrt{\tfrac{p}{3}} \sigma_m^{(j)}
\end{align}
where $\sigma_m^{(j)}$ denotes the Pauli operator $\sigma^{(j)}$ acting on qubit $m$ and identity on all other qubits, and $ p \in [0,1] $ is the error probability. The full register noise is then the product channel
%
\begin{equation}
    \mathcal{E}(\rho) =
    \left(\mathcal E_1\circ\mathcal E_2\circ\cdots\circ\mathcal E_M\right)(\rho),
    \label{localdepolchannelall}
\end{equation}

\subsection{Purification}

\subsubsection{Symmetric product states}

\begin{figure}[t]
  \centering
    \includegraphics[width=\linewidth]{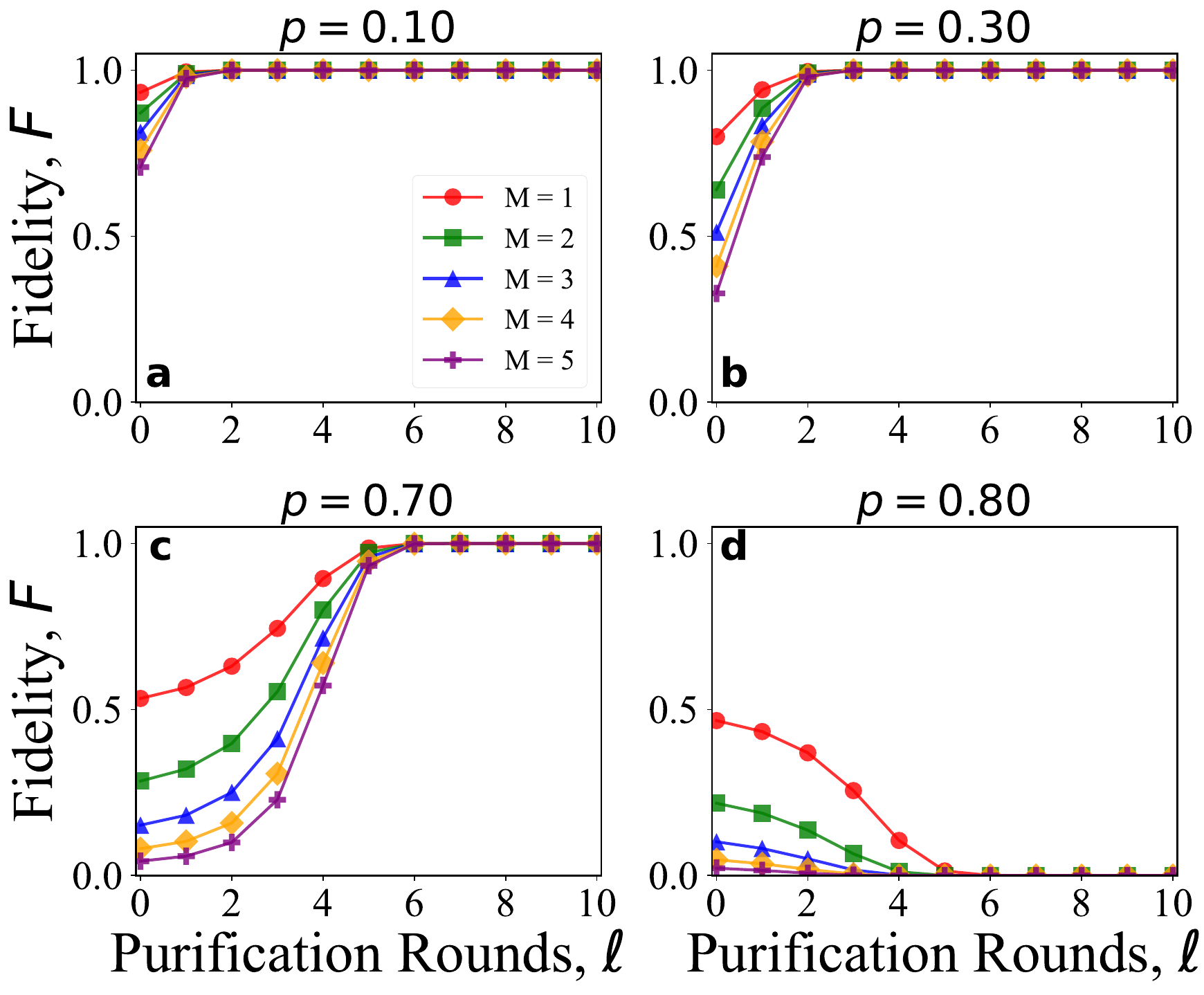}
    \caption{Fidelity improvement of locally depolarized states under successive SWAP purification rounds.  All plots show fidelity, $F$, versus rounds of SWAP purification, $\ell$, for various qubit numbers $ M $.  The initial state is the state $ | \psi_0 \rangle = \ket{+}^{\otimes M}$  subjected to the channel (\ref{eq:Dp-def}) with the error probability (a) $ p = 0.1 $; (b) $ p = 0.3 $; (c) $ p = 0.7 $; (d) $ p = 0.8 $.    
    }
    \label{fig:fidelity_grid_depolarizing}
\end{figure}

We first show the effect of PQES applied to the $M$-qubit symmetric product state, subjected to local depolarization. For concreteness, we choose $|\psi_0\rangle=\ket{+}^{\otimes M}$. Because the single-qubit depolarizing channel is isotropic, the same error threshold applies to any product state with identical single-qubit factors, $|\phi\rangle^{\otimes M}$.


For a single qubit initially in $\ket{+}$, the local depolarizing channel gives
%
\begin{align}
\mathcal E_m(\ket{+}\bra{+}) = \left(1-\frac{4p}{3}\right)\ket{+}\bra{+} + \frac{4p}{3}\frac{I}{2}.
\end{align}
The $M$-qubit noisy state is then
\begin{align}
\mathcal{E}(\ket{+}\bra{+}^{\otimes M} ) = \left[ \left(1-\frac{4p}{3}\right)\ket{+}\bra{+} + \frac{4p}{3}\frac{I}{2} \right]^{\otimes M}.
\label{localdepolstate}
\end{align}

The error threshold follows from the single-qubit eigenvalues. In the $\{\ket{+},\ket{-}\}$ basis, the noisy one-qubit state has eigenvalues
\begin{align}
F_+(p)=1-\frac{2p}{3},
\qquad
F_-(p)=\frac{2p}{3}.
\end{align}
The target eigenvector $\ket{+}$ is dominant if and only if $F_+(p)>F_-(p)$, i.e.
\begin{align}
p<\frac{3}{4}.
\end{align}
For the product state, the eigenvalue of $\ket{+}^{\otimes M}$ is $F_+(p)^M$, while the eigenvalue of a computational basis string in the $X$ basis with $k$ minus signs is $F_+(p)^{M-k}F_-(p)^k$. Hence $\ket{+}^{\otimes M}$ is the unique dominant eigenvector precisely when $p<3/4$. At $p=3/4$, $F_+=F_-=1/2$ and the state is completely mixed. For $p>3/4$, the dominant eigenvector is no longer the target; in the extreme case $p=1$, the one-qubit state has eigenvalues $1/3$ on $\ket{+}$ and $2/3$ on $\ket{-}$, so repeated purification converges toward $\ket{-}^{\otimes M}$ rather than $\ket{+}^{\otimes M}$.

Figure~\ref{fig:fidelity_grid_depolarizing} shows the fidelity after $\ell$ purification rounds and various physical error probabilities $p$. The local depolarization noise model is applied only to the initial state, and successive purification rounds are applied to the noisy state.  We see that, for $ p < 3/4 $, the fidelity converges to 1 for all system sizes.  Larger system sizes tend to require more purification rounds to reach the same fidelity, but this may be attributed to starting at a lower fidelity.  For moderate $p$ (e.g.  $p \lesssim 0.5$), only a few rounds are needed to reach $F_\ell \gtrsim 0.99$.  As $p$ approaches $3/4$ from below, more rounds are required. Once the error probability is larger than $ p \ge 3/4 $, the fidelities  worsen with $ \ell $.


\begin{figure}[t!]
  \centering
    \includegraphics[width=\linewidth]{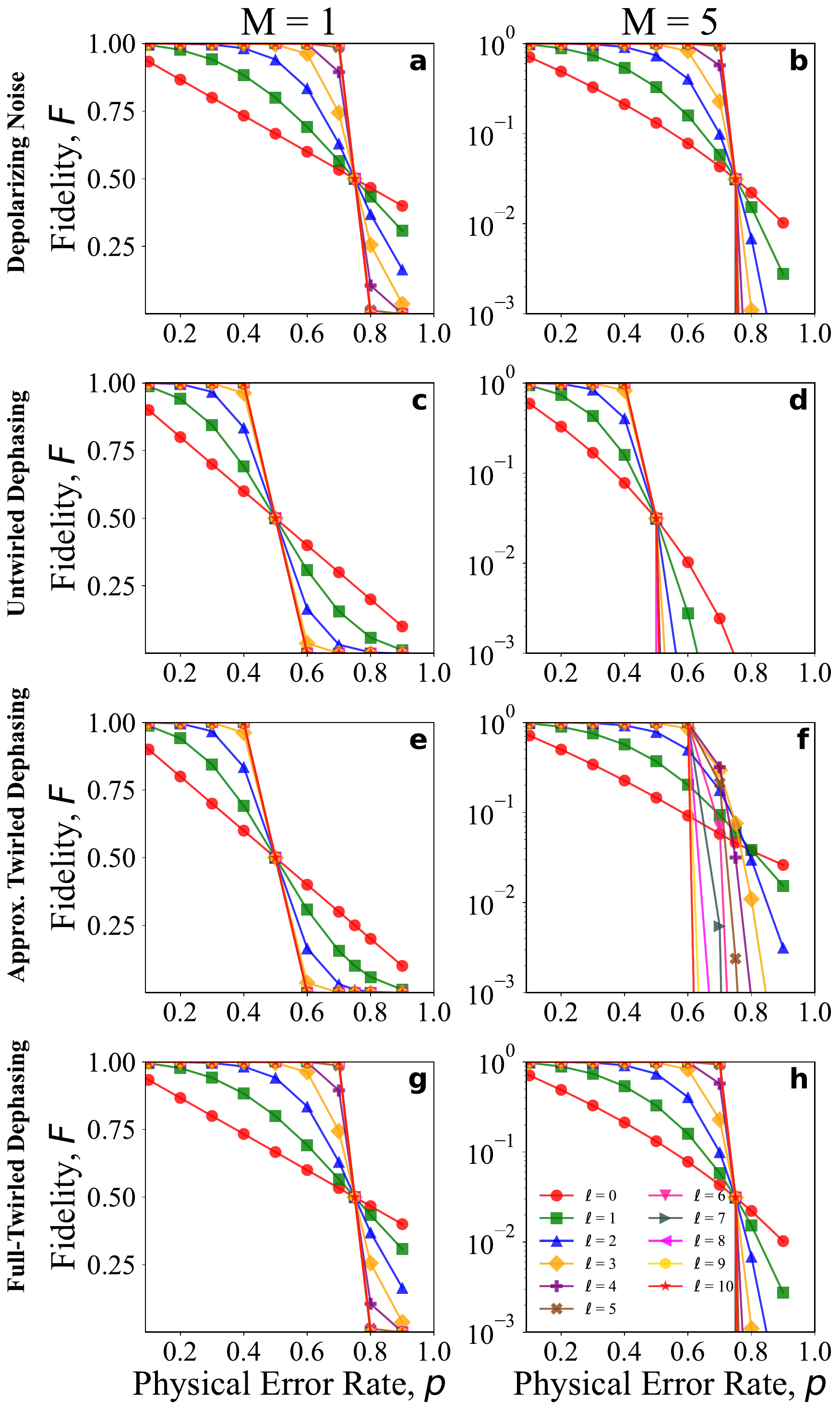}
    \caption{Fidelity evolution of various noise models under one cycle of PQES ($t=1$).  All plots show fidelity, $F$, versus physical error rate, $p$, for various numbers of purification rounds, $\ell$.   The initial state is $ | \psi_0 \rangle = \ket{+}^{\otimes M}$; The columns represent $M=1$ (left) and $M=5$ (right). The top row (a)(b) is subjected to the local depolarization channel (\ref{eq:Dp-def}); the second row (c)(d) is subjected to local dephasing without twirling (\ref{eq:Z_dephasing_channel_main}); the third row (e)(f) is subjected to local dephasing with $20\%$ twirling (\ref{eq:twirl-1qubit}), where the applied twirling gates are randomly selected from the total set of $3^M$ options; the fourth row (g)(h) is subjected to local dephasing with full twirling (\ref{eq:twirl-1qubit}). The protocol exhibits high spectral recoverability for depolarizing noise up to the product-state threshold $p_{\mathrm{th}}=3/4$. The fully-twirled dephasing matches the depolarization result as the channel is effectively isotropic after twirling.}
    \label{fig:SWAP_thresholds}
\end{figure}

We may clearly see the critical value of $ p $ by plotting the fidelities versus $ p $ as shown in Fig. \ref{fig:SWAP_thresholds}(a)(b).  For $p<3/4$, each purification level increases fidelity, and as $\ell$ grows, the fidelity  can be made arbitrarily close to $1$. The critical value remains $ p = 3/4 $ for all values of $ M $, consistent with the spectral analysis.

\subsubsection{Small error expansion}
\label{subsubsec:local-depol}

The results of the previous section were limited to permutationally symmetric product states. In order to see the performance of PQES on more general states, we perform a small error expansion under the local depolarizing channel.

For an arbitrary pure state $ | \psi_0 \rangle $, we may expand the density matrix in the $M$-qubit Pauli basis as
\begin{equation}
\ket{\psi_0}\!\bra{\psi_0} =  \frac{1}{2^M}\sum_{P} r_P P
\label{pauliexp}
\end{equation}
where the sum runs over $4^M $ Pauli strings and the real coefficients $r_P := \bra{\psi} P \ket{\psi} $ satisfy $\sum_P r_P^2 = 2^M$.  Each Pauli string has a weight $ w(P) $, which is equal to the number of non-identity Pauli operators.  The coefficients, $r_P $, may then be grouped by weight, so that we can define
\begin{equation}
    a_k =  \frac{1}{2^M}\sum_{P:\,w(P)=k} r_P^2,
\end{equation}
By performing a small-$ p $ expansion, we evaluate that the fidelity of a general state decays under the local depolarizing channel as (see Appendix \ref{app:smallerror})
\begin{align}
    F(\rho) &= 1 - \frac{4}{3}\,\bar{k}\,p + O(p^2)
\end{align}
where $ \bar{k} := \sum_k k a_k $. Under one round of PQES with purification (\ref{eq:purifiedstaten2}), the output fidelity is
\begin{equation}
    F({\cal P}(\rho))
    =\frac{\bra{\psi_0}\rho^2\ket{\psi_0}}{\Tr(\rho^2)}
    = 1 - O(p^2).
\end{equation}
Thus, for a single purification round, the leading logical error $ O(p) $ is removed entirely.  This shows the effectiveness of the PQES against local depolarizing errors for the general case in the small-$p$ regime.

\subsection{PQES error threshold}

To analyze the PQES error threshold behavior, we again follow the procedure in Sec. \ref{subsec:threshold-performance} using the local depolarizing channel (\ref{localdepolchannelall}) starting from the initial state $ |\psi_0 \rangle = |+\rangle^{\otimes M} $.  

Figure \ref{fig:combined_fidelity_logical_error_depol}(a)(b) shows the fidelity evolution as a function of error and purification cycles for $ M = 5 $.  Similarly to the global depolarizing channel (see Fig. \ref{fig:onequbitrec}), the fidelity follows an exponential decay that saturates to a finite value for the physical error values shown.  For the physical error probabilities in Figs. \ref{fig:combined_fidelity_logical_error_depol}(a)(b), increasing $\ell$ reduces the effective logical error rate (\ref{gammadef}). 

Figure \ref{fig:combined_fidelity_logical_error_depol}(c)(d) shows the effective logical error rate $ \gamma_L $ as a function of the physical error rate $ p $. For both $ M = 1 $ and $ M= 5$, we observe a  crossover of the curves, giving the PQES error threshold at 
\begin{align}
    p_{\mathrm{th}} = 3/4 .
\end{align}
This threshold agrees with the spectral condition derived above: below $p_{\mathrm{th}}$, the target product state remains the unique dominant eigenvector of the locally depolarized density matrix; above $p_{\mathrm{th}}$, the dominant eigenvector changes. For $p<p_{\mathrm{th}}$, increasing the number of purification rounds $\ell$ substantially suppresses the effective logical error rate $\gamma_{L} $, reflecting that each PQES cycle yields a net purification gain that slows logical-error accumulation over time. As $p\to p_{\mathrm{th}}$, the benefit of increasing $\ell$ diminishes, and progressively larger $\ell$ is required to maintain a small effective decay rate. For $p>p_{\mathrm{th}}$, additional purification rounds amplify the wrong dominant eigenvector, so the target fidelity decreases rather than improves. The value $p_{\mathrm{th}}=3/4$ also appears in other nonlinear purification-based error-suppression settings~\cite{grafe2025ultrahigh}, reflecting the same single-qubit spectral crossing.

We note that the $M =1 $ case is physically the same as the global depolarizing channel for a single qubit $ D=2 $  The difference of Fig. \ref{fig:combined_fidelity_logical_error_depol}(c) to Fig. \ref{fig:onequbitrec}(d) arises only due to the difference in definition of the error channel.  For the global depolarizing channel, $p = 1 $ corresponds to a completely mixed state, whereas for the local depolarizing channel, $p = 3/4 $ is the corresponding point.  The $ M=5 $ case, shown in Fig. \ref{fig:combined_fidelity_logical_error_depol}(d), is distinct from the global depolarizing channel, since only local depolarization is applied.

\begin{figure}[t]
  \centering
    \includegraphics[width=\linewidth]{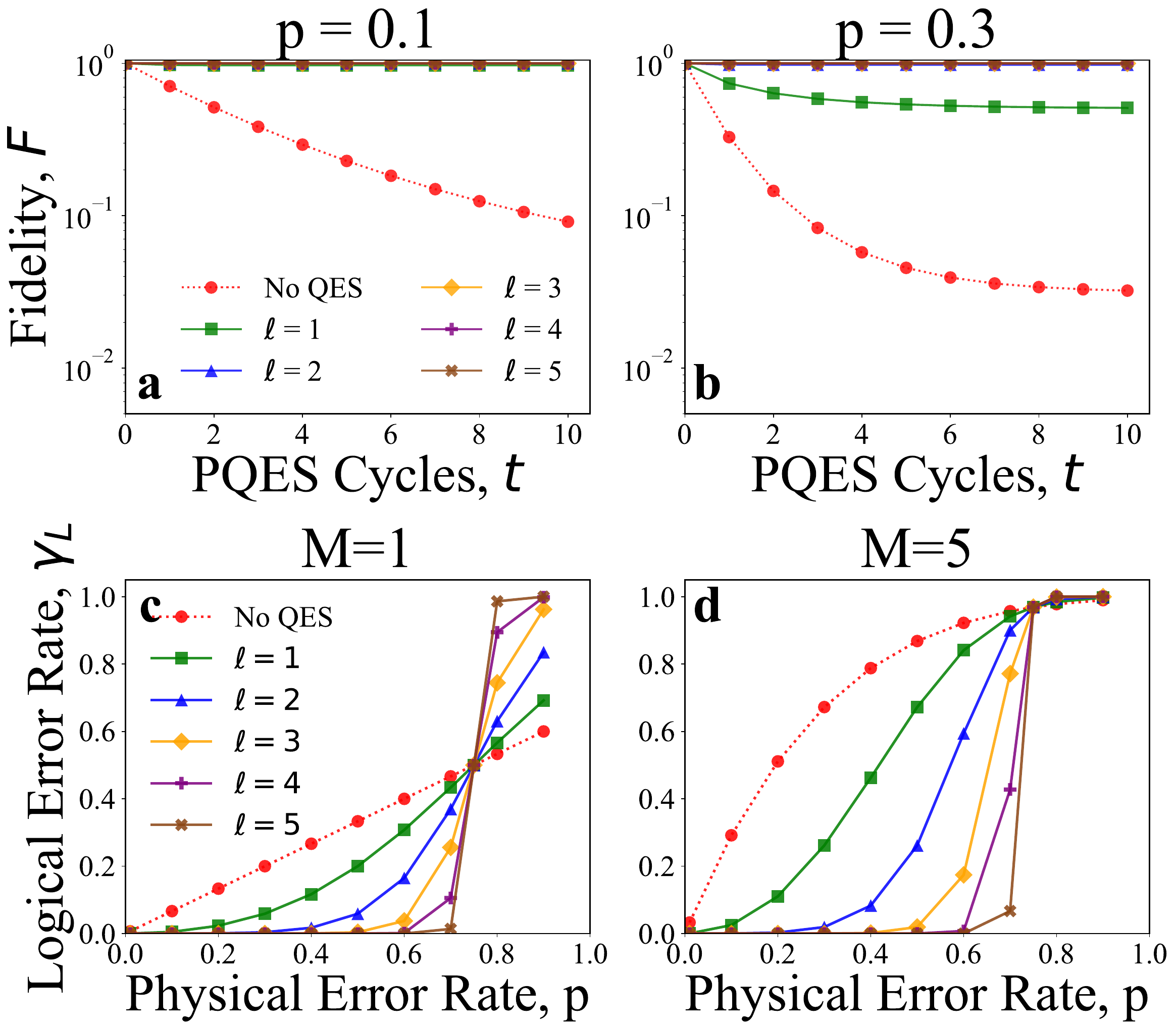}
    \caption{(Top) Fidelity evolution of locally depolarized states under PQES.  Plots show fidelity, $F$, versus cycles of PQES purification, $t$, for various numbers of purification rounds, $\ell$.  The initial state is the state $ | \psi_0 \rangle = \ket{+}^{\otimes M}$ with $M=5$,  subjected to the channel (\ref{eq:Dp-def}) with the error probability (a) $ p = 0.1 $; (b) $ p = 0.3 $.
    (Bottom) Effective logical error rate $\gamma_{L} $, versus physical error rate $p$, for local depolarizing noise (\ref{eq:Dp-def}) on the target state $|+\rangle^{\otimes M}$. Curves are shown for (c) $M=1$ and (d) $M=5$ for various purification rounds, $\ell$. The no-suppression baseline corresponds to $\ell=0$. }
    \label{fig:combined_fidelity_logical_error_depol}
\end{figure}


\section{Local anisotropic errors}
\label{sec:local_anisotropic_errors}

In this section, we analyze local anisotropic errors under PQES. Unlike depolarizing noise, an anisotropic channel can change the eigenbasis of the noisy state relative to the target state. Since PQES amplifies the dominant eigenspace of the noisy density matrix, rather than directly undoing the physical error channel, anisotropic errors can produce a dominant-eigenvector mismatch and hence a saturation of the achievable fidelity. We use local dephasing as a canonical example and then show how Clifford twirling can convert the anisotropic channel into an effective depolarizing channel that is more naturally handled by PQES.

\subsection{Error channel}

As a canonical example of an anisotropic error, we consider a dephasing channel which can be defined on the $m$th qubit as
\begin{equation}
\label{eq:Z_dephasing_channel_main}
    \mathcal{E}_m^{(z)} (\rho)
    = E_m^{(0)} \rho {E_m^{(0)}}^\dagger + E_m^{(1)} \rho {E_m^{(1)}}^\dagger
\end{equation}
where the Kraus operators are
\begin{align}
E_m^{(0)} & =\sqrt{1-p} I_m \nonumber \\
E_m^{(1)} & =\sqrt{p} Z_m .
\end{align}
On an $M$-qubit register, we take the noise to be 
%
\begin{equation}
    \mathcal{E}_z(\rho) =
    \left(\mathcal E^{(z)}_1\circ\mathcal E^{(z)}_2\circ\cdots\circ\mathcal E^{(z)}_M\right)(\rho).
\label{dephasingchannel}
\end{equation}
which we will refer to as a local dephasing model.

\subsection{Conversion of dephasing to depolarizing errors}
\label{subsubsec:twirl-channel}

As discussed in Sec.~\ref{sec:noisyqubit}, PQES is most directly effective when the noise preserves the direction of the target eigenvector and only reduces its spectral weight. Depolarizing noise has this property. Anisotropic noise need not: it can rotate or tilt the dominant eigenvector of the noisy state away from the original target, producing a coherent-mismatch floor for purification \cite{Huggins2021,Koczor2021_dominantEV}.

This effect is already visible for a single qubit. Using the Bloch-sphere parameterization (\ref{eq:noisysinglequbit}), the effect of applying the dephasing channel (\ref{dephasingchannel}) is to leave the $z$ component invariant while contracting the transverse components:
\begin{equation}
\label{eq:Z_error_bloch_contraction_mapping}
 \vec r \;\longmapsto\;
    ( (1-2p)  r_x, (1-2p)  r_y,  r_z).
\end{equation}
This results in a change in the direction of the Bloch vector (see Fig. \ref{diag:depol_bloch}).  Applying the purification step (\ref{eq:qubitradialupdate}) acts to repolarize the vector.  However, since the purification does not change the direction of the Bloch vector, the correction step does not revert the state to the original vector $ \vec{r}$ (see Appendix \ref{app:axis-dependence} for more details). The resulting asymptotic state is the dominant eigenvector of the dephased state, not necessarily the original target state. This is the origin of the fidelity saturation observed below.

\begin{figure}[t]
  \centering
  \includegraphics[width=\linewidth]{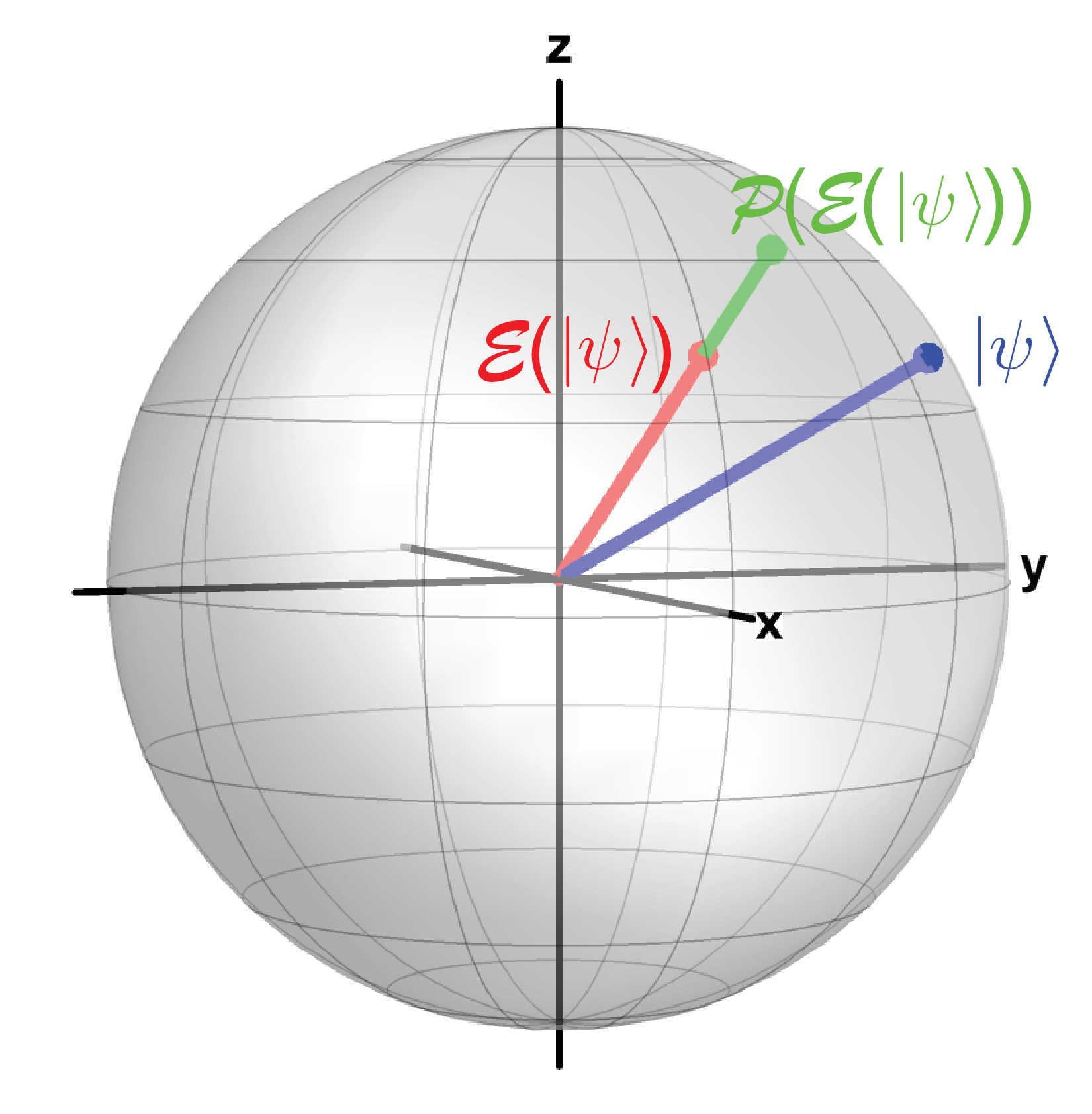}
  \caption{A single PQES cycle for a single-qubit state under dephasing.  The initial state $ | \psi \rangle $ has Bloch sphere parameters $ \theta = \pi/3, \phi = \pi/4 $.  The dephased state $ {\cal E} ( | \psi \rangle ) $ is applied according to  (\ref{dephasingchannel}) with $ p =0.3 $.  Finally one round of purification according to (\ref{eq:qubitradialupdate}) is applied to give ${\cal P} (  {\cal E} ( | \psi \rangle ) ) $.  The purified state lies in the same direction as $ {\cal E} ( | \psi \rangle ) $. }
  \label{diag:depol_bloch}
\end{figure}

We now show how Clifford frame-randomization rotations turn axis-biased dephasing into an effective depolarizing  channel, without changing the PQES map. We focus on local $Z$-dephasing (\ref{dephasingchannel}), but our results apply to any anisotropic local error channel.  The dephasing channel may be converted to a depolarizing channel by averaging over the $ X $, $ Y $, $ Z $-directions, by performing a basis transformation
\begin{align}
\label{eq:twirl-1qubit}
\mathcal{E}_m^{\mathrm{twirled}}(\rho) = \frac{1}{3} \sum_{U \in\{I,H,HS\}} 
U^\dagger {\cal E}_m^{(z)} (U \rho U^\dagger ) U .
\end{align}
where $ H $ is the Hadamard operator and $S=\mathrm{diag}(1,i)$  is the phase operator acting on the $ m$th qubit. Each $U$ permutes the Bloch axes, so averaging over the three rotations maps the local $ Z $ dephasing to a local depolarizing channel. On $M$ qubits, substituting (\ref{eq:twirl-1qubit}) into (\ref{localdepolchannelall}) gives the full local depolarizing channel. This would give $ 3^M $ different types of unitary operations to average over, corresponding to all combinations of $ \{I,H,HS\}^{\otimes M} $.  

In practice, the twirling process can be implemented by applying any one of the $ 3^M $ unitaries randomly before and after the dephasing errors take place. This converts the dephasing channel to a depolarizing channel such that the results of Sec. \ref{sec:local_depol} are recovered exactly. However, it has the drawback of averaging over a large number (e.g., $3^M$ for a deterministic implementation) of samples to obtain good statistics. This makes the full twirling process highly costly from both a real implementation and numerical point of view. 

More economically, a subset of the full set of $3^M$ gates can be applied for approximate twirling, while still improving results beyond the no-twirl baseline (see Sec. \ref{subsec:dephasing_error_threshold}). For this reason, we also examine approximate twirling protocols to convert local dephasing errors to a more isotropic form that has better performance with PQES. We consider the approximate twirling channel
\begin{align}
\label{eq:twirl-1qubit_approx}
\mathcal{E}_{\cal T} (\rho) = \frac{1}{T} \sum_{U \in {\cal T} } 
U^\dagger {\cal E}_z (U \rho U^\dagger ) U .
\end{align}
where $ T = |{\cal T}| $ and $ {\cal T} $ is a restricted set of Clifford unitaries to perform twirling over.

\begin{figure}[t]
  \centering
    \includegraphics[width=\linewidth]{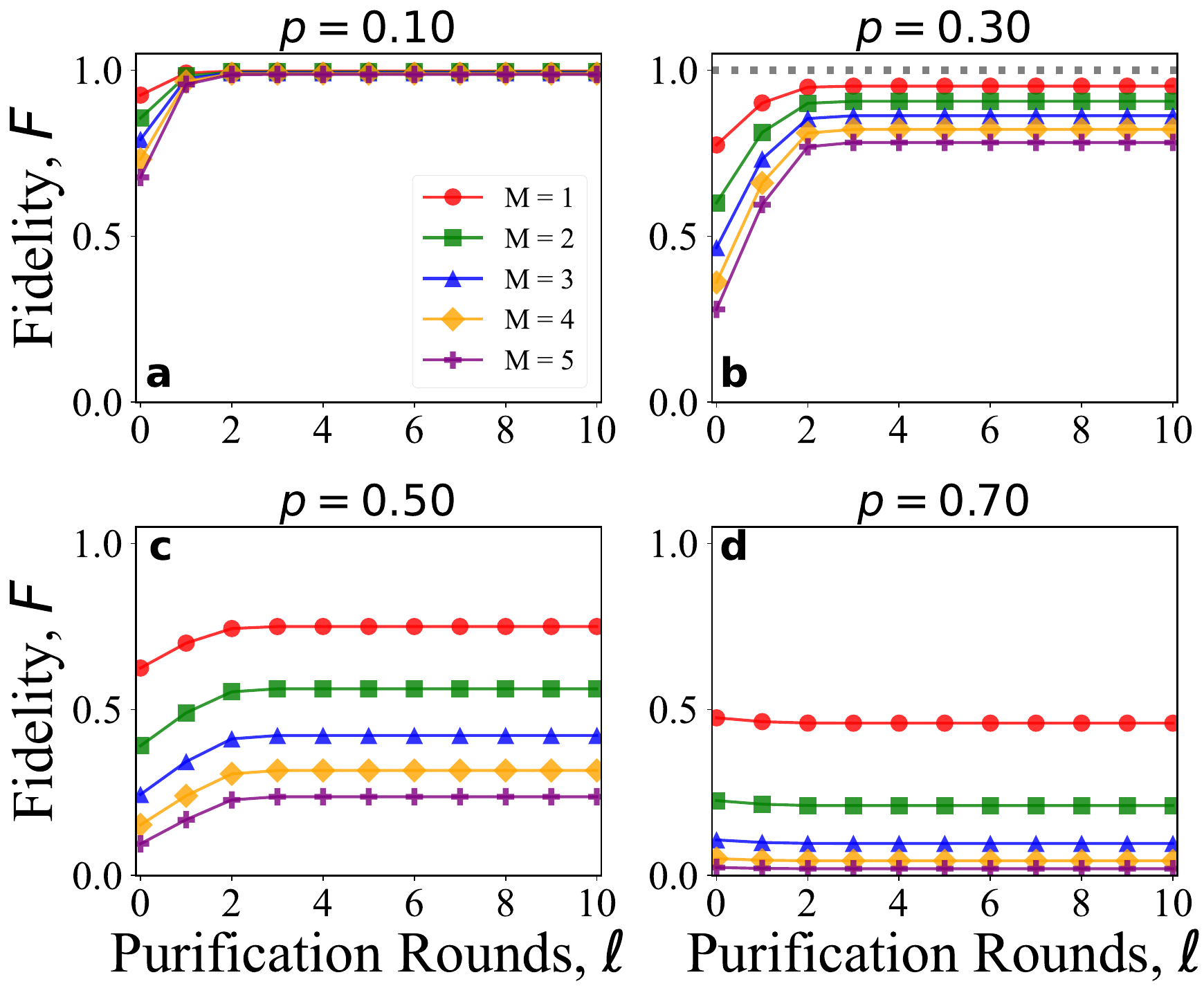}
    \caption{Fidelity improvement of locally dephased states under PQES, without twirling.  All plots show fidelity, $F$, versus rounds of purification, $\ell$, for various qubit numbers $ M $.  The initial state is $|\psi_0 \rangle = (\cos \tfrac{\theta}{2} |0 \rangle + e^{i \phi} \sin \tfrac{\theta}{2} |1 \rangle  )^{\otimes M} $, with parameters $\theta=\pi/3,  \phi=\pi/4$, and is subjected to the channel (\ref{eq:Z_dephasing_channel_main}) with the error probability (a) $ p = 0.1 $; (b) $ p = 0.3 $; (c) $ p = 0.5 $; (d) $ p = 0.7 $.
    }
    \label{fig:fidelity_grid_dephasing_untwirled}
\end{figure}

\subsection{Purification}

Figure \ref{fig:fidelity_grid_dephasing_untwirled} shows the fidelity evolution under multiple rounds of purification for the state $|\psi_0 \rangle = (\cos \tfrac{\theta}{2} |0 \rangle + e^{i \phi} \sin \tfrac{\theta}{2} |1 \rangle  )^{\otimes M} $, for various $ M $ and physical dephasing error probabilities $ p $.  No twirling is performed.  We see that, as before, below a critical error probability, the purification acts to improve the fidelity.  The primary difference from the depolarizing case is that the fidelities saturate to a value less than $ F = 1 $. This is the dominant-eigenvector mismatch illustrated in Fig.~\ref{diag:depol_bloch}: the dephasing channel changes the direction of the Bloch vector, and PQES repolarizes the state along this new direction. However, for small error probabilities, the fidelities nevertheless converge to values close to unity, showing that the purification can still be effective, although not perfect. For error probabilities above $ p>1/2 $, we do not expect good performance, since at $ p = 1/2 $, the $ x $ and $ y $ components of the Bloch vector (\ref{eq:Z_error_bloch_contraction_mapping}) are completely erased.  We see in Fig. \ref{fig:fidelity_grid_dephasing_untwirled}(c) the fidelities do, in fact, increase because the purification converges to $ | 0 \rangle^{\otimes M} $, which is closer to the original state than the initially dephased state.  

To overcome the saturation effect, we apply the twirling operation of (\ref{eq:twirl-1qubit}) to each qubit by averaging over $ 3^M $ unitary operations. According to the discussion in Sec. \ref{subsubsec:twirl-channel}, this converts the local dephasing channel to a local  depolarizing channel with the same parameter $p$.  This prediction is borne out numerically.  Performing the full twirling with the dephasing channel reproduces precisely the same graphs as shown in the genuine depolarizing channel Fig.~\ref{fig:fidelity_grid_depolarizing} (we do not show the twirled plots as they are visually identical). Interpreting Fig. \ref{fig:fidelity_grid_depolarizing} as the results for twirled dephasing and comparing to Fig. \ref{fig:fidelity_grid_dephasing_untwirled}, we see that the twirling generally acts to improve the fidelity in a wider range $p<3/4$.  For small errors (e.g. $ p=0.1 $) there is only a marginal improvement but for larger errors the increase in fidelity is more marked.  

Figure \ref{fig:SWAP_thresholds}(c)-(f) shows the shift in the critical error probability with and without twirling under the dephasing channel.  The plots show the fidelity $F$ versus $p$ for various $\ell$ and system sizes $M$. With the initial state set at $ | + \rangle^{\otimes M} $, and applying dephasing without twirling, additional rounds of purification act to improve the fidelity only if $ p< 1/2 $.  When twirling is added, the critical point shifts to $ p =3/4 $.  The critical point is unchanged for different $ M $. However, for larger $M$, the fidelity for the same number of purification rounds is reduced.  We may attribute this to the fact that for the same error rate $ p$, a larger system tends to have a higher probability that there is an error.  For example, the probability that there is no error in the whole system under the channel (\ref{dephasingchannel}) is $ (1-p)^M $, which reduces exponentially with $ M $.

\begin{figure}[t]
  \centering
    \includegraphics[width=\linewidth]{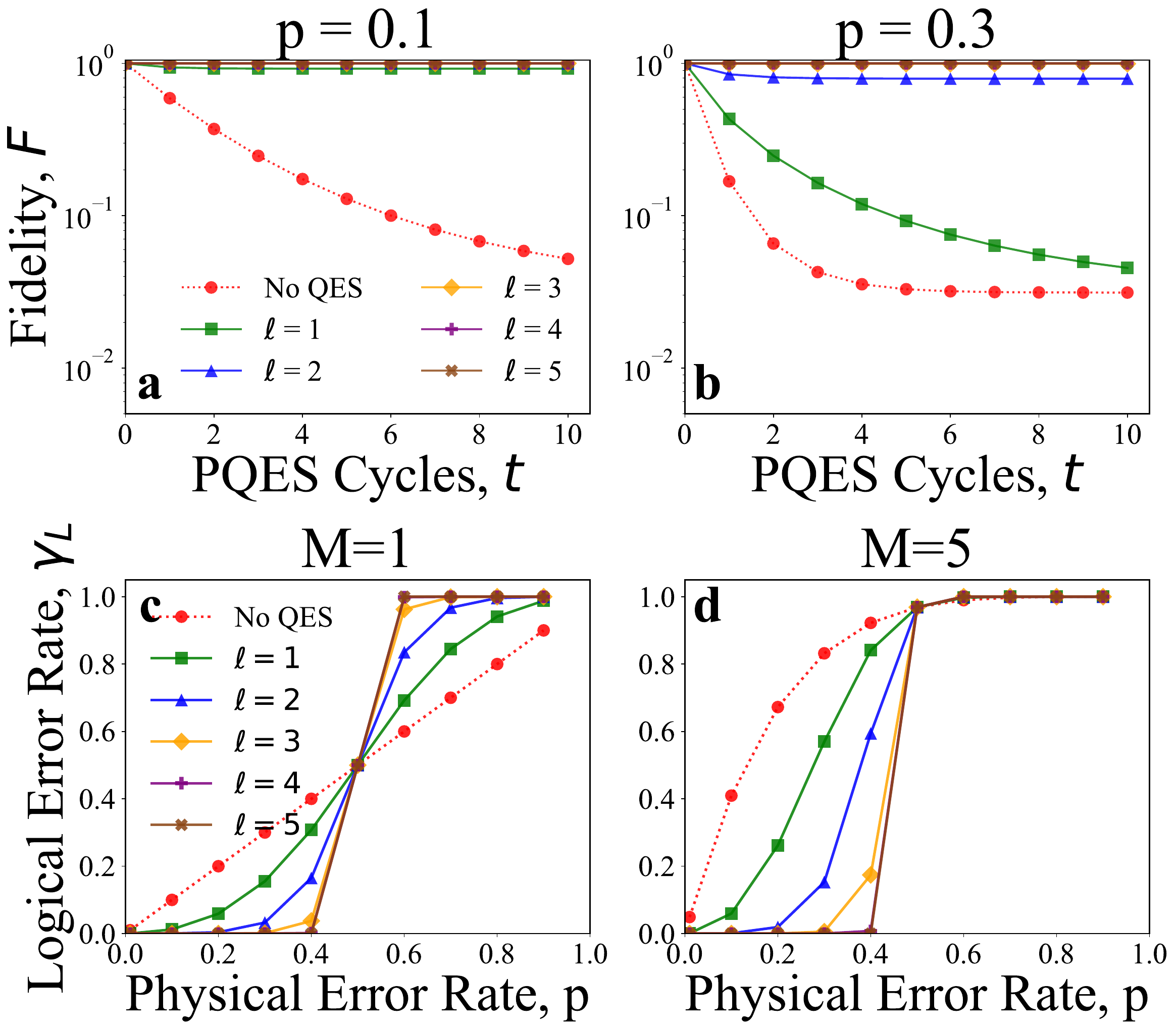}
    \caption{(Top) Fidelity improvement of locally dephased states with no twirling under PQES.  Plots show fidelity, $F$, versus iterations of PQES purification, $t$, for various numbers of purification rounds, $\ell$.  The initial state is the state $ | \psi_0 \rangle = \ket{+}^{\otimes M}$ with $M=5$,  subjected to the channel (\ref{dephasingchannel}) with the error probability (a) $ p = 0.1 $; (b) $ p = 0.3 $.
    (Bottom) Effective logical error rate, $\gamma_L$, versus physical error rate, $p$, for pure dephasing noise without twirling on the target state $|+\rangle^{\otimes M}$ with noise channel (\ref{dephasingchannel}). Curves are shown for $M=1$ (c) and $M=5$ (d) for various purification rounds, $\ell$. The no-suppression baseline corresponds to $\ell=0$. The threshold is seen at $p_{\mathrm{th}}=0.5$.}
    \label{fig:combined_fidelity_logical_error_pure_deph}
\end{figure}

\begin{figure}[t]
  \centering
    \includegraphics[width=\linewidth]{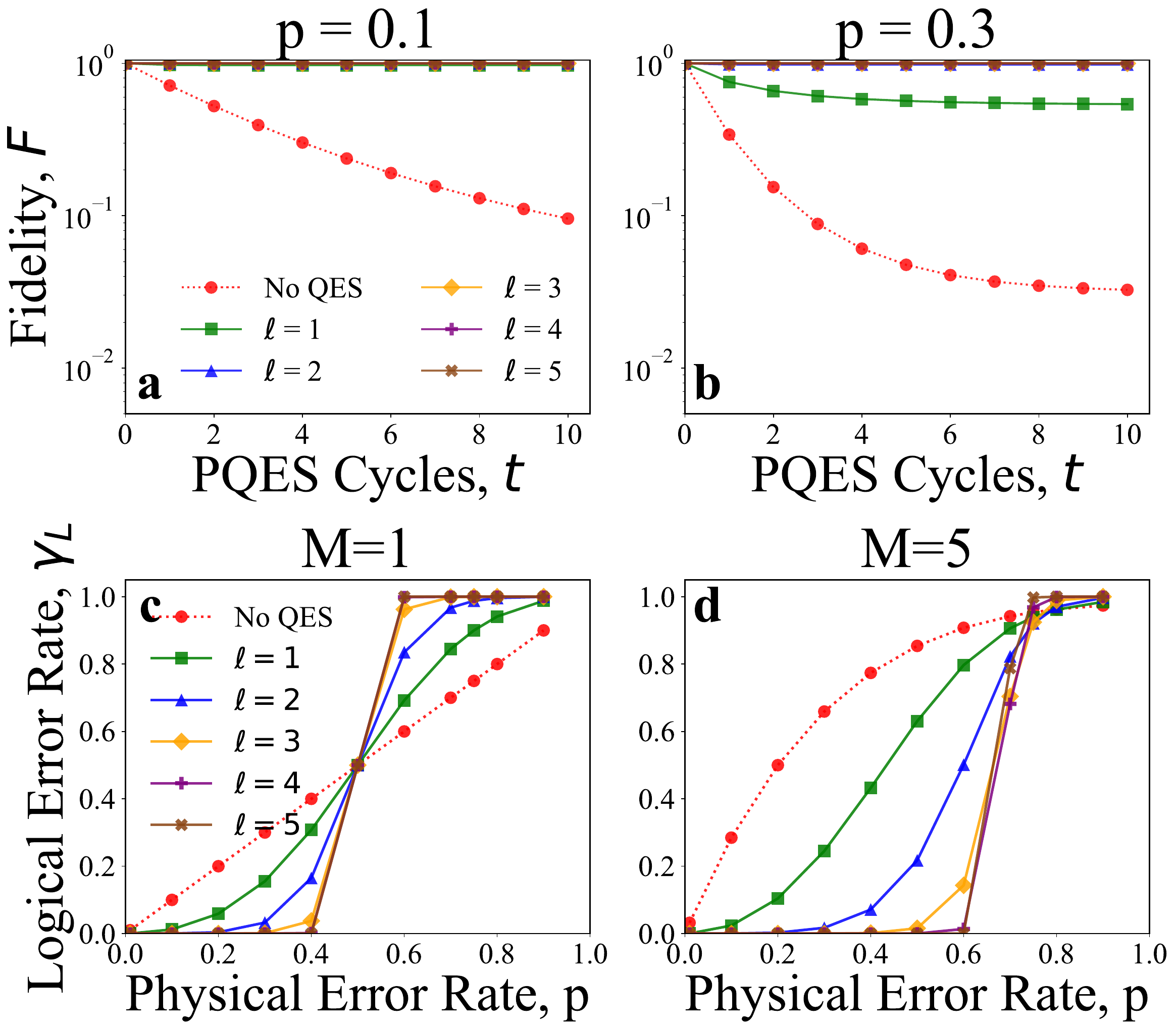}
    \caption{(Top) Fidelity improvement of locally dephased states with approximate twirling under PQES.  Plots show fidelity, $F$, versus iterations of PQES purification, $t$, for various numbers of purification rounds, $\ell$.  The initial state is the state $ | \psi_0 \rangle = \ket{+}^{\otimes M}$ with $M=5$,  subjected to the channel (\ref{eq:twirl-1qubit_approx}) with the error probability (a) $ p = 0.1 $; (b) $ p = 0.3 $.
    (Bottom) Effective logical error rate, $\gamma_L$, versus physical error rate, $p$, for approximately twirled-dephasing noise on the target state $|+\rangle^{\otimes M}$ with noise channel (\ref{eq:twirl-1qubit_approx}). Curves are shown for $M=1$ (c) and $M=5$ (d) for various purification rounds, $\ell$. The no-suppression baseline corresponds to $\ell=0$. Approximate twirling corresponds to using $20\%$ of the total $3^M$ total gate combinations, rounded up to a whole number.}
    \label{fig:combined_fidelity_logical_error_subsetTwirl_deph}
\end{figure}

\subsection{PQES error thresholds}
\label{subsec:dephasing_error_threshold}

We now consider the error threshold analysis by following the procedure in Sec. \ref{subsec:threshold-performance} using the local dephasing channel (\ref{dephasingchannel}) starting from the initial state $ |\psi_0 \rangle = |+\rangle^{\otimes M} $. 

Figure \ref{fig:combined_fidelity_logical_error_pure_deph} shows results for local dephasing without twirling for $M=1$ and $M=5$. Figure \ref{fig:combined_fidelity_logical_error_pure_deph}(a)(b) shows the fidelity evolution as a function of error and purification cycles for $ M = 5 $.  Similarly to the global and local depolarizing channels (see Figs. \ref{fig:onequbitrec} and \ref{fig:combined_fidelity_logical_error_depol}), the fidelity follows an exponential decay that saturates to a finite value for the physical error values shown. As before, increasing the purification depth, or equivalently the number of copies $N=2^\ell$, reduces the effective logical error rate (\ref{gammadef}).

Figure \ref{fig:combined_fidelity_logical_error_pure_deph}(c)(d) shows the effective logical error rate $ \gamma_{L} $ as a function of the physical error rate $ p $.  For both $M=1$ and $M=5$, we observe a  crossover of the curves, giving the PQES error threshold at 
\begin{align}
    p_{\mathrm{th}} = 1/2 .
\end{align}
For $p<p_{\mathrm{th}}$, increasing the number of purification rounds $\ell$ substantially suppresses the effective logical error rate $\gamma_L$, reflecting that each PQES cycle yields a net purification gain that slows logical error accumulation over time. As $p\to p_{\mathrm{th}}$, the marginal benefit of increasing $\ell$ diminishes, and progressively larger $\ell$ is required to maintain a small effective decay rate. For $p>p_{\mathrm{th}}$, additional purification rounds amplify the wrong dominant eigenvector, so the target fidelity decreases rather than improves.

With full twirling, (\ref{eq:twirl-1qubit}) maps local dephasing to the local depolarizing channel with the same contraction parameter. The PQES dynamics then coincide pointwise with the local depolarizing results of Fig.~\ref{fig:combined_fidelity_logical_error_depol}: the fidelity relaxation, steady-state values, and effective decay rates are the same, and the threshold is restored to $p_{\mathrm{th}}=3/4$.

We also consider approximate twirling, where only a fraction of the full $3^M$ Clifford gates is applied. Figure \ref{fig:combined_fidelity_logical_error_subsetTwirl_deph} shows results for local dephasing with only $20\%$ twirling applied for $M=1$ and $M=5$ (rounded up to a whole number of gates); the applied twirling gates are randomly selected from the total set of $3^M$ options. Figure \ref{fig:combined_fidelity_logical_error_subsetTwirl_deph}(a)(b) shows the fidelity evolution as a function of error and purification cycles for $ M = 5 $.  Similarly to the global and local depolarizing channels (see Figs. \ref{fig:onequbitrec} and \ref{fig:combined_fidelity_logical_error_depol}), the fidelity follows an exponential decay that saturates to a finite value for the physical error values shown. As before, increasing the purification depth, or equivalently the number of copies $N=2^\ell$, reduces the effective logical error rate (\ref{gammadef}). 

Figure \ref{fig:combined_fidelity_logical_error_subsetTwirl_deph}(c)(d) shows the effective logical error rate $ \gamma_L $ as a function of the physical error rate $ p $. For $M=1$, approximate twirling here does not change anything from the untwirled case, as it corresponds to only applying a single Clifford gate; on average, this simply rotates the dephasing axis but does not make the noise channel any less anisotropic. Hence, the crossover is $p_{\mathrm{th}}=0.5$, identical to pure dephasing without twirling, as shown in Fig~\ref{fig:combined_fidelity_logical_error_pure_deph}(c). For $ M= 5$, the crossover is consistent with the PQES error threshold  $p_{\mathrm{th}} \sim 0.75$. For $p<p_{\mathrm{th}}$, increasing the number of purification rounds $\ell$ substantially suppresses the effective logical error rate $\gamma_L$, reflecting that each PQES cycle yields a net purification gain that slows logical error accumulation over time. As $p\to p_{\mathrm{th}}$, the marginal benefit of increasing $\ell$ diminishes, and progressively larger $\ell$ is required to maintain a small effective decay rate. For $p> p_{\mathrm{th}}$, adding additional purification rounds produces a worse result, characteristic of the error rate being over the threshold.  

We see that approximate twirling is effective at increasing the PQES error threshold beyond the no-twirl crossover at $p_{\mathrm{th}}=0.5$, with full twirling achieving $p_{\mathrm{th}}=0.75$, consistent with the isotropic (depolarization) case. One can accordingly tune the number of twirling gates applied, in order to negotiate the threshold with the resource demands of applying more gates.

In both noise families, the qualitative picture is therefore the same: below the relevant threshold, PQES can, in principle, drive the single-layer fidelity arbitrarily close to unity by increasing $\ell$ (in the absence of additional noise within the PQES layer). However, above the threshold, the protocol cannot asymptotically recover the target state, even with additional purification rounds.

\section{Summary and Conclusions}
\label{sec:conclusion}

We have introduced a scheme for purification-based quantum error suppression using SWAP tests on multiple noisy copies, which we refer to as purification quantum error suppression (PQES). Operationally, PQES uses multiple noisy copies of the same state rather than an encoded code block. Given $ N $ identically prepared noisy copies of a state, it recursively compares and merges pairs. In the non-postselected scheme developed here, all SWAP-test outcomes are retained and combined with signs, yielding $\rho^N$ as physically present in the final output state. This then provides access to expectation values with respect to the purified state $\rho^N/\Tr(\rho^N)$. PQES suppresses the error of an {\it unknown} principal eigenstate of the noisy density matrix by consuming additional noisy copies and processing the resulting SWAP-outcome record. Notably, these properties are QEC-like: the quantum registers undergo physical measurement-induced transformations, no branch is discarded, the purification structure may be distributed through an algorithmic circuit, and the number of copies $N$ plays the role of the scalable redundancy parameter. The fact that this purification works on unknown states is crucial for circuit applications, where the intermediate quantum state is generally not known.



The mechanism of PQES is that the power map amplifies the dominant eigenvector of the noisy state. As a result, PQES is most naturally suited for depolarizing errors, where the PQES error threshold was evaluated to be $ 75 \% $ for local errors. Though a direct comparison with typical QEC thresholds is not possible due to the distinct mechanisms and operations involved, we nevertheless emphasize that the PQES error threshold here is notably higher than what is possible with conventional QEC codes. For example, the code-capacity thresholds for stabilizer codes have been reported to be $18.9\% $ \cite{dennis2002topological,wang2011threshold,bombin2012strong,wootton2012high}, with an $25 \%$ upper bound due to the no-cloning theorem \cite{smith2006upper}.

Compared to depolarizing errors, local anisotropic errors behave differently because they can change the dominant eigenvector of the noisy state. For local dephasing, the purification map repolarizes along the dephased Bloch direction (Fig. \ref{diag:depol_bloch}) and therefore generally saturates below unit fidelity when the dominant eigenvector is misaligned with the original target. Regardless, the difference in performance for small error probabilities is not drastically different between depolarizing and dephasing errors. Furthermore, full local twirling converts the dephasing channel into an effective local depolarizing channel and restores the corresponding depolarizing threshold. These thresholds identify the point at which the target state ceases to be the dominant eigenvector of the noisy density matrix. Below this point, increasing the purification depth suppresses the effective logical decay in the repeated-cycle model. Above it, the same power map converges to a different dominant eigenvector or eigenspace; additional purification rounds cannot recover the target.


The resource requirements of PQES depend strongly on the implementation. In the fully parallel binary-tree realization of Fig.~\ref{diag:main_use_case}(a), the coherent data footprint scales as $n \sim O(M 2^\ell)$ for an $M$-qubit register and $N=2^\ell$ copies, while the SWAP-test layers have depth $O(\ell)$ up to the architecture-dependent cost of a register-level controlled-SWAP. At first glance, this would appear to have a heavy qubit count, but errors are also suppressed exponentially with $ N $, due to (\ref{eq:purifiedstate}). Then the logical error rate scales as $ \gamma_L \sim e^{-N} \sim e^{- n/M } $. This is comparable with the scaling of standard QEC, where there is exponential scaling of logical errors with the code distance. For example, in the surface code, the number of physical qubits scales as the square of the code distance so that $ \gamma_L \sim e^{- \sqrt{n} } $. The scaling in terms of the number of qubits can be even improved further by using the recycled implementation of Sec.~\ref{sec:protocol}; here, the coherent data footprint is reduced to $O(M\ell)$ by generating the tree sequentially and keeping only one node per depth live at a time. The tradeoff is an increased sequential SWAP-test count of $N-1$. This implementation occupies a different architectural regime from the smallest known reset-based virtual-distillation constructions~\cite{czarnik_qubit_eff} --- it provides a pairwise-SWAP architecture with an explicit outcome record that is natural for a continuous input of copies, register recycling, explicit outcome-record processing, and interleaved error-suppression schedules.

The exponential suppression obtained from the power map must also be balanced against the sampling overhead of the signed estimator. Since observables are estimated as ratios involving $\Tr(\rho^N)$, highly mixed states and large purification powers require more samples. Thus, PQES has the characteristic tradeoff of QEM: increasing $N$ improves spectral selectivity, but the normalization of $\Tr(\rho^N)$ controls the number of samples needed to resolve the mitigated observable. Just as in QEM, PQES suppresses errors in estimated observables by consuming additional copies and samples rather than protecting a single encoded logical state throughout a computation with polynomial fault-tolerant overhead.

As the PQES redundancy consists of corresponding noisy copies rather than an entangled code block, an ideal circuit layer is applied as the same unitary on each copy, which makes common algorithmic gates structurally simple. At a stage containing $k$ corresponding copies, an ideal circuit block is implemented as $U^{\otimes k}$. Each factor acts only within one copy register, so an error arising during one application of $U$ does not propagate directly into the other copies before the SWAP-test layer. This differs from encoded stabilizer-code computation, where logical gates act on an entangled code space and the ability to perform transversal gates is constrained by the Eastin--Knill theorem~\cite{eastin2009restrictions}. While we leave questions regarding fault-tolerant PQES as future work, the simple transversal structure of logical gates is promising in terms of suppressing error proliferation. 


In this paper, our focus has been on the quantum-computing scenario, but the same multi-copy purification primitive may be useful in other settings where many nominally identical noisy quantum states are produced. Examples include high-fidelity state preparation, Bell-pair or magic-state factories, quantum sensor networks~\cite{Eldredge2018}, distributed quantum computing~\cite{Kimble_2008,Illiano_2022}, quantum metrology with spin ensembles~\cite{Wineland1992,Ma2025,grafe2025ultrahigh}, and quantum communication or repeater architectures~\cite{Briegel1998,Bennett1996Twirling}. In such regimes, PQES can act as a continuous purification layer, distilling higher-quality observable estimates from noisy input copies while keeping the coherent memory footprint small.


\medskip
\begin{acknowledgments}
This work is supported by the SMEC Scientific Research Innovation Project (2023ZKZD55); the National Natural Science Foundation of China (92576102); the Science and Technology Commission of Shanghai Municipality (22ZR1444600); the NYU Shanghai Boost Fund; the China Foreign Experts Program (G2021013002L); the NYU-ECNU Institute of Physics at NYU Shanghai; the NYU Shanghai Major-Grants Seed Fund; and Tamkeen under the NYU Abu Dhabi Research Institute grant CG008.
\end{acknowledgments}

\paragraph*{Data and Code Availability.}
All data supporting the findings of this study are available from the corresponding author upon reasonable request.

\bibliography{references}

\clearpage

\appendix


\section{PQES Overhead}
\label{app:overhead}

Eq.~\eqref{purifiedoexp} is evaluated using experimental shots $k\in\{1,\dots,N_{\mathrm{samp}}\}$. From each shot, we obtain (i) a SWAP-outcome string $\vec{\sigma}^{(k)}_\ell$ and hence a parity weight $\Omega_k:=\Omega_{\vec{\sigma}^{(k)}_\ell}\in\{\pm1\}$,
and (ii) a single-shot measurement outcome $o_k$ from measuring the observable $O$ on the output register (i.e. $o_k\in\mathrm{spec}(O)$). Define the sample means

\begin{equation}
\widehat{A}:=\frac{1}{N_{\mathrm{samp}}}\sum_{k=1}^{N_{\mathrm{samp}}}\Omega_k\,o_k,
\quad
\widehat{B}:=\frac{1}{N_{\mathrm{samp}}}\sum_{k=1}^{N_{\mathrm{samp}}}\Omega_k,
\end{equation}
so that $\mathbb{E}[\widehat{A}]=\Tr(O\rho^N)$ and $\mathbb{E}[\widehat{B}]=\Tr(\rho^N)$, and we estimate $\langle O\rangle_\ell$ by $\widehat{\langle O\rangle}_\ell:=\widehat{A}/\widehat{B}$. By a standard delta-method (large-$N_{\mathrm{samp}}$) expansion for ratio estimators, the asymptotic variance is
\begin{equation}
\mathrm{Var}\!\bigl(\widehat{\langle O\rangle}_\ell\bigr)
\;\approx\;
\frac{1}{N_{\mathrm{samp}}\,\Tr(\rho^N)^2}\;
\mathrm{Var}\!\Big(\Omega_{\vec{\sigma}_{\ell}}  \,[\,\langle O \rangle_{\vec{\sigma}_{\ell}}-\langle O\rangle_\ell\,]\Big),
\label{eq:vd_ratio_var}
\end{equation}
and hence the standard error is
\begin{equation}
\epsilon
\;\approx\;
\frac{\sqrt{\mathrm{Var}\!\big(\Omega_{\vec{\sigma}_{\ell}} \,\bigl[\,\langle O \rangle_{\vec{\sigma}_{\ell}}-\langle O\rangle_\ell\,\bigr]\big)}}{\sqrt{N_{\mathrm{samp}}}\;|\Tr(\rho^N)|}.
\label{eq:vd_ratio_se}
\end{equation}
Since $\Omega^2=1$ and $\mathbb{E}[\Omega(o-\langle O\rangle_\ell)]=0$, one also has the simplification
$\mathrm{Var}\!\big(\Omega[\,o-\langle O\rangle_\ell\,]\big)=\mathbb{E}[(o-\langle O\rangle_\ell)^2]$.
In particular, if $O$ is normalized so that $\|O\|_\infty\le 1$ (e.g. Pauli strings), then
$\mathrm{Var}(\Omega[\,o-\langle O\rangle_\ell\,])\le 4$ giving 
\begin{equation}
\epsilon \;\lesssim\; \frac{2}{\sqrt{N_{\mathrm{samp}}}\;|\Tr(\rho^N)|}. 
\end{equation}

\section{Pauli basis expansion}
\label{app:paulibasis}

We connect the spectral action of the PQES map from Sec. \ref{sec:purification} to the Pauli-basis expansion. First expand mixed states using the $M$-qubit Pauli expansion
\begin{equation}
    \rho = \frac{1}{2^M}\sum_{P} r_P P,
\end{equation}
where the sum runs over all $4^M$ Pauli strings $P\in\{I,X,Y,Z\}^{\otimes M}$ and
$r_P := \Tr(\rho P)\in\mathbb{R}$. In these coordinates, the purity is
\begin{equation}
    \Tr(\rho^2) = \frac{1}{2^M}\sum_{P} r_P^2,
\end{equation}

Depolarizing noise contracts all non-identity Pauli components uniformly, $r_P \mapsto (1-p)\,r_P$ for $P\neq I$, producing an \emph{isotropic} shrinkage of the generalized Bloch vector and a corresponding reduction in purity.
By contrast, the PQES map ${\cal P}$ does not, in general, act as an isotropic (radial) expansion in Pauli space. Indeed, although $\Tr(\rho^2)=2^{-M}\sum_{P} r_P^2$ implies that ${\cal P}$ increases the Euclidean radius $\sum_{P\neq I} r_P^2$ whenever it increases purity, the map ${\cal P}$ is nonlinear and spectral: it preserves the eigenbasis and updates only the eigenvalues according to \eqref{eq:eigenvalue_update}. An isotropic rescaling $r_P \mapsto \alpha\,r_P$ (with a single $\alpha$ for all $P\neq I$) would correspond to an affine, rotationally invariant action on operator space, whereas \eqref{eq:eigenvalue_update} shows that ${\cal P}$ depends on the full set $\{\lambda_i\}$ through the normalization $\sum_j \lambda_j^2$, and therefore generically changes not only the length but also the direction of the Pauli-coefficient vector $(r_P)_{P\neq I}$.

Nevertheless, in the isotropic regime, the dynamics are confined to highly symmetric families of spectra, so the action of ${\cal P}$ is well captured by a one-parameter ``radial'' picture:
depolarizing (or fully twirled dephasing) drives the spectrum toward uniformity by shrinking $\sum_j \lambda_j^2$, while purification counteracts this by polarizing the spectrum (increasing the contrast among the $\lambda_i$ via \eqref{eq:ratio_increase}), which necessarily increases $\Tr(\rho^2)=\sum_i \lambda_i^2$ and hence increases $\sum_{P\neq I} r_P^2$.
This explains why our protocol is naturally well matched to depolarizing noise and why fully twirled dephasing exhibits identical performance: the noise acts isotropically in Pauli space, and purification reliably increases the Bloch radius, even though it is not, in general, an isotropic expansion map on the individual $r_P$ components.

\section{Purification and purity}
\label{app:purification}

Here we show that the normalized power map used in PQES increases the purity of a state and characterize its fixed points. For one purification step,
\begin{align}
{\cal P}(\rho)=\frac{\rho^2}{\Tr(\rho^2)} .
\end{align}
More generally, the $\ell$-round purification map is ${\cal P}_\ell(\rho)=\rho^N/\Tr(\rho^N)$ with $N=2^\ell$.

Let
\begin{align}
\rho = U\,\mathrm{diag}(\boldsymbol{\lambda})\,U^\dagger,
\qquad
\boldsymbol{\lambda}=(\lambda_i),
\end{align}
where $\lambda_i\ge0$ and $\sum_i\lambda_i=1$. Then one purification step gives
\begin{equation}
\label{eq:eigenvalue_update}
   \rho' ={\cal P}(\rho)
    = U\,\mathrm{diag}(\boldsymbol{\lambda}')\,U^\dagger,
    \qquad
    \lambda_i'
    = \frac{\lambda_i^2}{\sum_j \lambda_j^2}.
\end{equation}
Thus the map is spectral: the eigenbasis is unchanged and only the spectrum is updated. Because $g(x)=x^2$ is strictly increasing on $[0,1]$, the ordering of eigenvalues is preserved. Moreover, for $\lambda_i>\lambda_j>0$,
\begin{align}
\frac{\lambda_i'}{\lambda_j'}
=
\left(\frac{\lambda_i}{\lambda_j}\right)^2
>
\frac{\lambda_i}{\lambda_j},
\end{align}
so the relative contrast between unequal nonzero eigenvalues increases.

The purity after one purification step is
\begin{equation}
\label{eq:purity_monotone}
    \Tr({\rho'}^2)
    = \sum_i (\lambda_i')^2
    = \frac{\sum_i \lambda_i^4}{\bigl(\sum_j \lambda_j^2\bigr)^2}.
\end{equation}
Define the power sums
\begin{align}
s_n:=\sum_i\lambda_i^n .
\end{align}
Then
\begin{align}
\Tr({\rho'}^2)=\frac{s_4}{s_2^2},
\qquad
\Tr(\rho^2)=s_2,
\end{align}
and hence
\begin{equation}
\Tr({\rho'}^2)-\Tr(\rho^2)
=
\frac{s_4-s_2^3}{s_2^2}.
\label{purityinc}
\end{equation}
It remains to show that $s_4\ge s_2^3$. This follows directly from Jensen's inequality. Treating the eigenvalues $\lambda_i$ as a probability distribution and applying Jensen's inequality to the convex function $x^3$ gives
\begin{align}
s_4
=
\sum_i \lambda_i \lambda_i^3
\ge
\left(\sum_i \lambda_i \lambda_i\right)^3
=
s_2^3 .
\end{align}
Therefore,
\begin{align}
\Tr({\rho'}^2)\ge \Tr(\rho^2).
\end{align}
Equality in Jensen's inequality holds if and only if all nonzero eigenvalues of $\rho$ are equal. That is, equality holds precisely when $\rho$ is maximally mixed on its support. These states are exactly the fixed points of the normalized power map. Pure states are included as the rank-one case: if $\rho$ has purity one, then $\rho^2=\rho$ and the map leaves it unchanged. The full-rank maximally mixed state is the opposite extreme.

For a full-rank state, the only mixed fixed point is the maximally mixed state. For a rank-deficient state, normalized projectors onto lower-dimensional subspaces are also fixed points. Thus, except for flat spectra on their support, the purification map strictly increases the purity and sharpens the spectrum toward the dominant eigenspace.

\section{Fidelity convergence under SWAP purification}
\label{app:fidelity_convergence}

In this section we show monotonic improvement and convergence of the fidelity $F \to 1$ under PQES for the global depolarizing channel of Sec. \ref{sec:global_depol}.  

For the Werner state (\ref{wernerrho}), $\rho$ commutes with $\rho_0=\ket{\psi}\!\bra{\psi}$. Then the spectrum of $\rho$ is 
\begin{equation}
    \lambda_1=F,\qquad \lambda_2=\cdots=\lambda_D=\frac{1-F}{D-1}. 
\end{equation}
It follows that $\Tr\rho^2=F^2+\frac{(1-F)^2}{D-1}$ and $\bra{\psi}\rho^2\ket{\psi}=F^2$. Under the PQES update (\ref{eq:purifiedstaten2}) the fidelity update is therefore
\begin{equation}
\label{eq:gF-def}
    F'=g(F):=
    \frac{F^2}{F^2+\dfrac{(1-F)^2}{D-1}},\qquad D\ge 2.
\end{equation}
This recursive relation has the following properties. 

\begin{theorem}[Monotone behavior and convergence in the isotropic family]
\label{thm:isotropic-conv}
For any $D\ge 2$, the recursion $F'=g(F)$ in \eqref{eq:gF-def} has fixed points at $F_\star=1$ and $F_\star=1/D$ (and also $F_\star=0$ for the extended recursion on $F\in[0,1]$).
Moreover:
(i) if $F>1/D$ then $F'$ is strictly increasing and converges to $1$;
(ii) $F=1/D$ is a fixed point (the maximally mixed state in the Werner family).
The fixed point at $F=1$ is locally attractive with \emph{quadratic} rate.
\end{theorem}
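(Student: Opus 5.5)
The idea is to convert the fixed-point question into a single polynomial sign condition. Write $q:=(1-F)^2/(D-1)$, so that $g(F)=F^2/(F^2+q)$. The denominator is positive on $[0,1]$ except at $F=0$. There $g$ extends continuously by $g(0)=0$ for $D\ge2$, since $q=1/(D-1)>0$ at $F=0$. The fixed points then follow directly. $F=1$ gives $q=0$, hence $g(1)=1$. $F=0$ gives $g(0)=0$. For $F\neq 0$, the equation $g(F)=F$ is equivalent to $F=F^2+q$, i.e.
\begin{align}
(D-1)F(1-F)=(1-F)^2 .
\end{align}
Dividing by $1-F\neq 0$ gives $(D-1)F=1-F$, i.e. $F=1/D$. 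Checking directly, at $F=1/D$ the Werner parameter is $\lambda=1$, the maximally mixed state. This proves (ii) and the list of fixed points.

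For (i), I would compute the sign of $g(F)-F$ for $F\in(0,1)$:
\begin{align}
g(F)-F=\frac{F\bigl[F(1-F)-q\bigr]}{F^2+q}=\frac{F(1-F)\bigl[DF-1\bigr]}{(D-1)(F^2+q)} .
\end{align}
This is strictly positive for $1/D<F<1$, so the iterate sequence $F_{n+1}=g(F_n)$ is strictly increasing whenever $F_0\in(1/D,1)$. I would also check that $g$ maps $(1/D,1)$ into itself. The bound $g(F)<1$ holds because $q>0$ for $F<1$. The bound $g(F)>1/D$ follows from $g(F)>F$. Hence the sequence is monotone and bounded by $1$, so it has a limit $L\in(1/D,1]$. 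Continuity of $g$ forces $L$ to be a fixed point, and the only fixed point in $(1/D,1]$ is $1$. (If $F_0=1$, it is already fixed.) This handles "strictly increasing and converges to 1". I read "$F'$ strictly increasing" as $g(F)>F$, i.e.\ the iterates increase. If instead monotonicity of $g$ as a function is intended, it follows from $g(F)=1/(1+h(F))$ with $h(F)=(1-F)^2/\bigl((D-1)F^2\bigr)$, which is strictly decreasing on $(0,1)$.

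For the quadratic rate, I would set $\epsilon=1-F$ and expand. Then $q=\epsilon^2/(D-1)$ and
\begin{align}
1-g(F)=\frac{q}{F^2+q}=\frac{\epsilon^2}{(D-1)(1-\epsilon)^2+\epsilon^2}=\frac{\epsilon^2}{D-1}+O(\epsilon^3),
\end{align}
so $1-F_{n+1}\le C(1-F_n)^2$ near $1$. Equivalently, $g'(1)=0$, which gives local attractivity with quadratic convergence.

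The only mildly delicate points are the behaviour at the endpoint $F=0$ (handled by the continuous extension) and making sure the factorization of $g(F)-F$ is carried out correctly, since the sign argument in (i) rests on it. I expect no real obstacle beyond that algebra.
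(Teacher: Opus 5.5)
Your proposal is correct and follows essentially the same route as the paper: the factorization $g(F)-F=F(1-F)(DF-1)/[(D-1)F^2+(1-F)^2]$, the bound $g(F)<1$ for $F<1$, the monotone-bounded-sequence argument with continuity forcing the limit to be $1$, and the expansion $1-g(1-\varepsilon)=\varepsilon^2/(D-1)+O(\varepsilon^3)$. One harmless slip: the denominator $F^2+q$ does not vanish at $F=0$ (it equals $1/(D-1)>0$ there), so $g(0)=0$ holds directly and no continuous extension is needed.
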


\begin{proof}

\emph{(i) $g(F)>F$ for $F\in(1/D,1)$:} For $F\in(0,1)$, compute
\begin{align}
    g(F)-F
    &= \frac{F^2}{F^2+\frac{(1-F)^2}{D-1}} - F \nonumber\\
    &= \frac{F(1-F)\,(DF-1)}{(D-1)F^2+(1-F)^2}.
\end{align}
The denominator is strictly positive and $F(1-F)>0$ on $(0,1)$, hence $\mathrm{sign}(g(F)-F)=\mathrm{sign}(DF-1)$.
Therefore $g(F)>F$ iff $F>1/D$, with equality at $F\in\{1/D,1\}$.

\emph{(ii) Boundedness:} $g(F)\in[0,1]$ for $F\in[0,1]$, and $g(F)<1$ unless $F=1$, since the denominator in \eqref{eq:gF-def} exceeds the numerator unless $1-F=0$.

\emph{(iii) Convergence and fixed point:} If $F_0>1/D$, then $F_{n+1}=g(F_n)>F_n$ and the sequence is monotone increasing and bounded above by $1$, hence convergent to some limit $L\le 1$.
By continuity, $L=g(L)$, so $L$ must be a fixed point. Since $L\ge F_0>1/D$ and $g(F)>F$ on $(1/D,1)$, the only consistent limit is $L=1$.
If $F_0=1/D$, then $F_n\equiv 1/D$ for all $n$.

\emph{(iv) Local rate near $F=1$:} Write $F=1-\varepsilon$ with $\varepsilon\ll 1$. Then
\begin{align}
    F'
    &=\frac{(1-\varepsilon)^2}{(1-\varepsilon)^2+\frac{\varepsilon^2}{D-1}}
    =1-\frac{\varepsilon^2}{D-1}+O(\varepsilon^3),
\end{align}
so $1-F'$ is quadratic in $\varepsilon$. This shows superlinear (quadratic) convergence to $F=1$.
\end{proof}

\section{Small error expansion for local depolarizing channel}
\label{app:smallerror}

Consider the Pauli expansion of an arbitrary pure state $\rho_0=\ket{\psi_0}\!\bra{\psi_0}$ as given in~(\ref{pauliexp}). Define the weight $w(P)$ of a Pauli string $P$ as the number of non-identity single-qubit factors in $P$:
\begin{equation}
    w(P) := \bigl|\{ m \in \{1,\dots,M\} : P_m \neq I \}\bigr|.
\end{equation}
Under local depolarizing noise, each non-identity factor is contracted by $\eta = 1 - \frac{4p}{3}$, so a string of weight $w(P)$ acquires attenuation $\eta^{w(P)}$. The noisy pre-purification state is therefore
\begin{equation}
    \rho \;=\; \mathcal{E}_p(\rho_0)
    \;=\; \frac{1}{2^M}\sum_{P} \eta^{\,w(P)} r_P P .
\end{equation}
Using Pauli orthonormality, we obtain
\begin{align}
    F(\rho)
    = \langle \psi_0 | \rho | \psi_0 \rangle
      &= \frac{1}{2^M}\sum_{P} \eta^{\,w(P)} r_P^2,
    \label{eq:Fp-pauli}\\[4pt]
    \Tr(\rho^2)
      &= \frac{1}{2^M}\sum_{P} \eta^{\,2w(P)} r_P^2 .
    \label{eq:Tp-pauli}
\end{align}
It is convenient to group the squared coefficients by Pauli weight. Define the Pauli-weight distribution of the target
\begin{equation}
    a_k \;:=\; \frac{1}{2^M}\sum_{P:\,w(P)=k} r_P^2,
    \qquad
    \sum_{k=0}^{M} a_k = 1 .
\end{equation}
Then Eqs.~\eqref{eq:Fp-pauli}--\eqref{eq:Tp-pauli} become
\begin{align}
\label{eq:Fidelity_in_eta}
    F(\rho) \;&=\; \sum_{k=0}^{M} a_k\,\eta^{k},\\
    \Tr(\rho^2) \;&=\; \sum_{k=0}^{M} a_k\,\eta^{2k}.
    \label{eq:Fp-Tp-weight}
\end{align}
The coefficients $\{a_k\}$ encode how ``globally'' the target $\ket{\psi_0}$ is supported in Pauli space: states whose weight distribution is biased toward large $k$ suffer faster degradation of both $F(\rho)$ and $\Tr(\rho^2)$ as $p$ increases.

For weak depolarization ($p\ll 1$), we can expand $\eta^k = (1-\tfrac{4p}{3})^k$ to first order, which gives
\begin{align}
    F(\rho) &= 1 - \frac{4}{3}\,\bar{k}\,p + O(p^2),\\
    \Tr(\rho^2) &= 1 - \frac{8}{3}\,\bar{k}\,p + O(p^2),
\end{align}
where
\begin{equation}
\label{eq:smallP_Fin}
    \bar{k} := \sum_k k\,a_k
\end{equation}
is the average Pauli weight of the target $\ket{\psi_0}$.

We now evaluate the action of the PQES map (\ref{eq:purifiedstaten2}). The corresponding output fidelity is
\begin{equation}
\label{eq:VD_Fout_appendix}
    F({\cal P}(\rho))
    \;=\; \frac{\bra{\psi_0}\rho^2\ket{\psi_0}}{\Tr(\rho^2)} .
\end{equation}
To obtain its small-$p$ behavior, write $\rho=\rho_0+\delta\rho$ with $\delta\rho=O(p)$ and $\Tr(\delta\rho)=0$. Then
\begin{align}
    \bra{\psi_0}\rho^2\ket{\psi_0}
    &= \bra{\psi_0}(\rho_0+\delta\rho)^2\ket{\psi_0} \nonumber\\
    &= 1 + 2\bra{\psi_0}\delta\rho\ket{\psi_0} + \bra{\psi_0}\delta\rho^2\ket{\psi_0},
\end{align}
and
\begin{align}
    \Tr(\rho^2)
    &= \Tr\!\bigl[(\rho_0+\delta\rho)^2\bigr] \nonumber\\
    &= 1 + 2\Tr(\rho_0\delta\rho) + \Tr(\delta\rho^2) \nonumber\\
    &= 1 + 2\bra{\psi_0}\delta\rho\ket{\psi_0} + \Tr(\delta\rho^2),
\end{align}
where we used $\rho_0=\ket{\psi_0}\!\bra{\psi_0}$ and $\Tr(\rho_0\delta\rho)=\bra{\psi_0}\delta\rho\ket{\psi_0}$.
Therefore the \emph{linear} terms in $\delta\rho$ cancel in the ratio~\eqref{eq:VD_Fout_appendix}, giving
\begin{align}
    F({\cal P}(\rho))
    &= 1 - \Bigl(\Tr(\delta\rho^2)-\bra{\psi_0}\delta\rho^2\ket{\psi_0}\Bigr) + O(p^3)
    \nonumber\\
    &= 1 - O(p^2).
    \label{eq:smallP_Fout}
\end{align}
In other words, under the PQES map, the leading deviation of the purified fidelity is \emph{quadratic} in the physical error rate:
the $O(p)$ infidelity present in $F(\rho)$ is removed by normalization in~\eqref{eq:VD_Fout_appendix}, and the remaining contribution is governed by the second-order term $\delta\rho^2$ (with a state- and channel-dependent coefficient). This is consistent with the spectral picture that ${\cal P}(\rho)$ effectively squares eigenvalues and renormalizes, producing superlinear convergence in the high-fidelity regime.

\section{Anisotropic errors}
\label{app:axis-dependence}

When considering the behavior of PQES in the context of depolarization vs. dephasing errors, distinct behavior arises. Consider a single qubit with target Bloch unit vector $\hat n=(\sin\theta\cos\phi,\sin\theta\sin\phi,\cos\theta)$. Local $Z$-dephasing with error probability $p_z$ contracts the transverse components by $\beta_z:=1-2p_z$, giving
\begin{equation}
    {\vec r} =(\beta_z \sin\theta\cos\phi,\ \beta_z \sin\theta\sin\phi,\ \cos\theta).
\end{equation}
Two consequences follow:

(i) \emph{Tilt toward the dephasing axis.} Unless $\theta\in\{0,\pi\}$, the Bloch vector of the noisy state tilts toward the dephasing axis.

(ii) \emph{PQES cannot restore erased transverse coherence.} Under the PQES update (\ref{eq:purifiedstaten2}), a single qubit state's Bloch sphere vector updates as (\ref{eq:qubitradialupdate}).  This preserves direction and only stretches the \emph{existing} components. If $X/Y$ coherence was strongly suppressed, the map cannot “rotate” amplitude back into $X/Y$; it can only scale up what is left. Quantitatively, with $F=\frac12(1+\hat n  \cdot {\vec r} )$ and $\text{Tr} (\rho^2) =\frac12(1+\|{\vec r} \|^2)$,
\begin{align}
    \hat n \!\cdot\!{\vec r} &=\beta_z\sin^2\theta+\cos^2\theta
        =1-(1-\beta_z)\sin^2\theta \\
    \|{\vec r} \|^2&=\beta_z^2\sin^2\theta+\cos^2\theta .
\end{align}
After one (effective) purification round,
\begin{equation}
    F'
    =\frac12\!\left(1+\frac{2}{1+\|{\vec r} \|^2}\,[\,\beta_z\sin^2\theta+\cos^2\theta\,]\right).
\end{equation}
For equatorial targets ($\theta=\pi/2$), the input coherence is just $\beta_z$, and the per-round gain is linear:
\begin{align}
    F=\tfrac12(1+\beta_z)\quad\longrightarrow\quad
    F' &=\tfrac12\!\left(1+\frac{2\beta_z}{1+\beta_z^2}\right) \\
    &=\tfrac12\bigl(1+2\beta_z+O(\beta_z^3)\bigr) . 
\end{align}
When $\beta_z\ll 1$ (strong dephasing) each round increases the transverse coherence by at most a constant factor. The core limitation remains: anisotropic noise removes or suppresses particular components of the target Bloch vector, while the PQES power map only amplifies the eigenbasis already present in the noisy state. It cannot rotate the state back toward a target direction whose coherence has been erased.

\end{document}